\documentclass[11pt,letterpaper]{article}
\usepackage{setspace}
\usepackage[margin=1in, paper=letterpaper]{geometry}

\usepackage{microtype}
\usepackage{amsmath,amssymb,amsfonts}
\usepackage{amsthm}
\usepackage[utf8]{inputenc}
\usepackage{hyperref}
\usepackage[page]{appendix}
\usepackage{tikz}

\usepackage{xspace}
\usepackage{delimset}
\usepackage{xcolor}
\usepackage{thmtools}
\usepackage{cleveref}

\newtheorem{theorem}{Theorem}
\newtheorem{lemma}{Lemma}

\newtheorem{remark}{Remark}
\newtheorem{observation}{Observation}
\Crefname{observation}{Observation}{Observations}

\newcommand{\size}{\delim{|}{|}}
\newcommand{\ceil}{\delim{\lceil}{\rceil}}

\newcommand{\set}{\delimpair{\{}{[m]|}{\}}}
\newcommand{\brc}{\brk[c]}

\let\eps\epsilon

\newcommand*{\PRED}{\textup{\tt pred}\xspace}

\newcommand*{\RANK}{\textup{\tt rank}\xspace}
\newcommand*{\SELECT}{\texttt{\tt select}\xspace}

\newcommand*{\SUM}{\texttt{\tt sum}\xspace}
\newcommand*{\SEARCH}{\texttt{\tt search}\xspace}
\newcommand*{\RANKM}{\texttt{\tt rankm}\xspace}

\newcommand*{\MSB}{\texttt{\tt msb}\xspace}

\DeclareMathOperator{\polylog}{polylog}
\DeclareMathOperator{\poly}{poly}
\renewcommand*{\O}{\mathcal{O}}

\usepackage[inline]{enumitem}
\setlist[enumerate]{label=(\arabic*)}

\usepackage{natbib}
\def\Clambda{1024}
\def\Cprobe{64}
\def\Cfactor{64}
\def\Cgamma{16}
\def\Cdelta{8}

\renewcommand{\Pr}{\operatorname{Pr}\brk[s] }
\newcommand*{\Prc}{\operatorname{Pr}\delimpair {[} {[m]|} {]} }
\DeclareMathOperator{\dist}{dist}

\title{Achieving Optimal Redundancy for Small Dynamic Rank/Select Dictionaries}

\begin{document}

\author{Gabriel Marques Domingues\footnote{
{\tt {gm@mail.tau.ac.il}}.
School of Electrical Engineering, Tel Aviv University.
This research was supported by the Israel Science Foundation, grant No. 1948/21.}}
\date{}

\begin{titlepage}
\maketitle

\begin{abstract}
In this paper, we study the number of bits required to construct a dynamic dictionary with optimal time for $\texttt{rank}/\texttt{select}$ operations.
Using the standard (multiplication) Word-RAM model
with $w$-bit words,
we construct a data-structure for a dynamic $\texttt{rank}/\texttt{select}$ dictionary for
a set $S\subseteq\{0,1,\cdots,u-1\}$ of $n$ elements that,
given a parameter $1\leq k\leq \log^*w$,
uses 
$$\operatorname{lg}\binom{u}{n}+\mathcal{O}(n\log^{(k)}w)\text{ bits}$$
taking optimal $\mathcal{O}(k+\log_w n)$ time (worst-case) for all operations.
We show optimality for $n=w^{\mathcal{O}(1)}$ and $\log u=\Theta(w)$
by extending the lower bound of 
Li, Liang, Yu, and Zhou [FOCS 2023]
to super-polynomial universes:
any dynamic dictionary for $n\leq \sqrt{u}$ elements
and $\log u=\Theta(w)$
that uses $\operatorname{lg}\binom{u}{n}+\mathcal{O}(n\log^{(k)}n)$ bits 
requires $\Omega(k)$ time for operations.
Lastly, we extend the data-structure to a dynamic fully indexable dictionary (that also supports $\texttt{rank}/\texttt{select}$ on the complement of $S$).
\end{abstract}
    
\end{titlepage}

\section{Introduction}

A classic data-structure problem is to design
dynamic $\RANK/\SELECT$ dictionaries using as few bits as possible.
There has been a lot of recent development in these constructions for
the Word-RAM model~\cite{pibiri2020succinct,li2023dynamic,kuszmaul2026succinct,domingues2026compressing}.
A $\RANK$/$\SELECT$ dictionary is a data-structure that encodes a set $S\subseteq\{0,1,\cdots, u-1\}$ and supports the following operations:
\begin{enumerate}
\item $\RANK(S, x)$ returns the number of elements of $S$ less than or equal to $x$, and
\item $\SELECT(S, i)$, for $1\leq i\leq \size{S}$, returns the $i$-th smallest element in $S$ .
\end{enumerate}
These can be combined to support other operations,
namely, predecessor $\PRED(S, x)$, which returns the largest element of $S$ less than or equal to $x$, and membership $(x\in S)$, which returns whether $x$ belongs to $S$.
Throughout the paper, we focus on the dynamic case, which supports insertions and deletions to and from $S$.

For a dynamic $\RANK/\SELECT$ dictionary,~\cite{fredman1989cell} proved that the optimal time to implement all operations in Word-RAM with $w$-bit words is $\Omega(\log_w n)$, where $n=\size{S}$.
The information-theoretic lower bound to encode the dictionary is $\ceil{\log_2\binom{u}{n}}$.
We say a data-structure uses $R$ redundancy if it uses $\ceil{\log_2\binom{u}{n}}+R$ bits.
Most existing constructions of dynamic $\RANK$/$\SELECT$ dictionaries with optimal time for operations have redundancy $\Omega(n)$~\cite{pibiri2020succinct,domingues2026compressing}.
Using a randomized data-structure,~\cite{kuszmaul2026succinct} achieved $\O(nw/2^{(\log_w n)^{1/3}})$ redundancy, which is $o(n)$ for $n\geq 2^{\omega(\log^4 w)}$, with optimal amortized time 
$\O(\log_w n)$ in expectation.
For deterministic data-structures,
the current best result is that of~\cite{domingues2026compressing},
achieving $\O(nw^\eps)$ redundancy for any constant $\eps\leq1/2$.
This construction uses the standard Word-RAM model, that is, only multiplication and standard $w$-bit operations (addition, bit-shifts, etc.) are constant time.
In the cell-probe model, where \emph{all} $w$-bit operations are constant-time,
the construction of~\cite{domingues2026compressing} can be improved to use $\O(n\log w)$ redundancy using the data-structure of~\cite{ajtai1984hash}.

As mentioned by~\cite{kuszmaul2026succinct},
there are no known lower bounds for the redundancy in the dynamic case.
This shows a significant
gap between the dynamic case
and the static case, where a lower bound of $\Omega(n/\polylog u)$ redundancy for $\O(1)$ time is known and reached~\cite{patrascu2010cell,liang2025optimal}.
A related lower bound was proven by~\cite{li2023tight}
for dynamic dictionaries (i.e., supporting only membership, insertions, and deletions),
yet it holds only for polynomial universes with $w=\Theta(\log u)$.
Denote by $\log^{(k)}(n)$ the (binary) logarithm applied $k$ times to $n$, and let $\log^*(n)=\min\set{k}{1\leq \log^{(k)}(n)<2}$.
The lower bound states that, given $k\leq \log^*n$,
a dictionary using $\O(n\log^{(k)}n)$ redundancy requires $\Omega(k)$ time (amortized).
However, for $\RANK$/$\SELECT$ dictionaries for polynomial universes,
the lower bound of~\cite{fredman1989cell} is stronger,
noting that $\Omega(\log_w n)=\Omega(\log_w u)=\Omega(w/\log w)=\Omega(k)$.
A recent mechanized proof of~\cite{dambrosio2026always}
extends the proof structure for super-polynomial universes, focusing on the case of $R<n$.

In this paper, 
we provide matching upper and lower bounds for $\RANK$/$\SELECT$ dictionaries
in the standard Word-RAM model for small sets $S$.
First, we advance the construction of~\cite{domingues2026compressing}
to achieve $\O(n\log^{(k)} n)$ redundancy
and $\O(k)$ time for all operations 
for a set of $n=w^{\O(1)}$ elements.
In addition, we extend it to implement $\RANK/\SELECT$ on the complement of $S$, achieving a so-called \emph{fully indexable dictionary} (FID).
Using the techniques of~\cite{domingues2026compressing},
we generalize this construction to any set of $n$ elements taking $\O(k+\log_w n)$ time
and using redundancy $\O(n\log^{(k)} w)$.

Second, we adapt the techniques of~\cite{li2023tight} to show that
any dynamic dictionary for $n\leq \sqrt{u}$ elements using
$\O(n\log^{(k)}n)$ redundancy requires $\Omega(k)$ time.
In fact, this lower bound holds even for dynamic dictionaries
that support only membership queries.
The main difficulty of the lower bound of~\cite{li2023tight}
is to remove the requirement of a polynomial universe.
We focus on the case of $R=\Omega(n)$ redundancy, and simplify the proof for this case,
avoiding the technical issue of~\cite{li2023tight} due to conditioning (see~\Cref{appendix:inner}),
which was also resolved in~\cite{dambrosio2026always}.

For $n=w^{\O(1)}$ elements,
our upper bound is optimal
and shows a fundamental limitation of compressing dynamic $\RANK$/$\SELECT$ dictionaries in Word-RAM.
This closes the problem of constructing constant-time fusion nodes as constant-time is not possible for $o(n)$ redundancy~\cite{li2023tight,dambrosio2026always}.
Moreover, this result shows that the standard Word-RAM model
(i.e., using only multiplication, addition, bit-shifts, bitwise operations, reads and writes)
suffices to solve this problem.

For $k=\log^*n$,
this is the first deterministic construction
using $\O(n)$ redundancy
that achieves $\O(\log^*w)=\log^{o(1)} w$ time for sets of $\poly(w)$ elements.
With this lower bound,
the remaining open question is how to achieve
$o(n)$ redundancy with the optimal time-space trade-off,
i.e., taking time $\Omega(\log^* n+\log(n/R))$ for redundancy $R$~\cite{li2023tight,dambrosio2026always}.
Currently, the cardinality reduction techniques of~\cite{domingues2026compressing} require $\Omega(n)$ bits to get worst-case update times.

\section{Contributions and Techniques}

In this section,
we describe our results and our techniques.
We employ the standard Word-RAM model with $w$-bit words.
Recall that a dynamic $\RANK$/$\SELECT$ dictionary encodes a set $S$ of $n$ integers in $[u]=\brc{0,1, \cdots,u-1}$ supporting $\RANK/\SELECT$ and updates.
A dictionary is said to have redundancy $R$ if it uses $\lg\binom{u}{n}+R$ bits, 
where we denote $\lg(a)=\ceil{\log_2(a)}$.
Throughout, we employ the following notation: $$\log^{(k)}(a)=\overbrace{\log\cdots\log}^{k}(a)~~\text{and}~\log^*(a)=\min\set{k}{1\leq\log^{(k)}(a)<2}~.$$

\subsection{Dynamic Rank/Select Dictionaries}

For our upper bounds,
all running times are worst-case and all data-structures are deterministic.
The memory model for dynamic memory allocation is $\mathcal{M}_B$~\cite{raman2003succinct},
in which the space utilization is defined to be the highest memory cell in use.
We build the following data-structure for a dynamic $\RANK$/$\SELECT$ dictionary of $n=w^{\O(1)}$ elements.

\begin{restatable}{theorem}{ThmDict}\label{thm:dictionary}
For every $1\leq k\leq \log^*n$,
there is a dynamic $\RANK$/$\SELECT$ dictionary
for $n=w^{\O(1)}$ elements
using $\O(n\log^{(k)}n)$ redundancy
and taking $\O(k)$ time for $\RANK/\SELECT$ and update.
\end{restatable}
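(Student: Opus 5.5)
We will prove \Cref{thm:dictionary} by induction on $k$, building a recursive structure where each level costs $\O(1)$ time and reduces the per-element redundancy from $\O(\log^{(j)} n)$ to $\O(\log^{(j+1)} n)$. The base case $k=1$ is the construction of~\cite{domingues2026compressing} specialized to $n=w^{\O(1)}$ elements. There, a fusion-node-style structure stores each element with $\O(\log n)$ extra bits, for example a local pointer or rank, and supports all operations in $\O(1)$ time using multiplication-based word parallelism. This gives redundancy $\O(n\log n)=\O(n\log^{(1)}n)$ in $\O(1)$ time.

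For the inductive step, we partition the universe $[u]$ into $n/b$ buckets by the high-order bits of the elements, with bucket capacity around $b=\Theta(\log^{(k-1)} n)$ (or a polynomial in it). The top level stores the bucket occupancies in a compact unary/prefix-sum encoding. This is where the $\lg\binom{u}{n}$ savings come from, since the high bits are stored implicitly, as in Elias--Fano. Each bucket is then a small dictionary of $\O(b)$ elements. We handle it recursively with the level-$(k-1)$ structure instantiated for sets of size $m=\O(b)$, whose redundancy is $\O(m\log^{(k-1)} m)=\O(m\log^{(k)} n)$ per bucket. The top-level navigation must be $\O(1)$ time. Since $n=w^{\O(1)}$, the sequence of bucket sizes fits in $\O(n)$ bits. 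We keep it in a constant-depth tree of words with packed prefix sums, updated in $\O(1)$ worst-case time by rewriting $\O(1)$ words through shifts and masks. Operations descend one level per recursion step, so the total time is $\O(k)$.

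The redundancy accounting sums, over levels, the cost of pointers and of the variable bucket sizes. Bucket $i$ with $n_i$ elements over a subuniverse of size $u_i$ costs $\lg\binom{u_i}{n_i}$ plus recursive redundancy. The concavity loss $\sum_i\lg\binom{u_i}{n_i}\le\lg\binom{u}{n}$ goes in our favor, and the occupancy encoding costs $\O(n)$ bits. Memory management in the $\mathcal{M}_B$ model is handled by storing each bucket in a fixed-size slot, chosen for the maximum size at its current capacity class, and reallocating with rebuilding spread over $\O(1)$ steps per update (deamortized). This costs $\O(1)$ words per bucket, that is, $\O(n w/b)$ bits. It is affordable only if $b\ge w/\log^{(k)}n$.

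That constraint is the main obstacle. Buckets must be large enough that the per-bucket word overhead is absorbed, yet small enough that the recursion genuinely shrinks the log. I would resolve it by taking bucket size $b=\poly(w)$ at the outer level, so that the pointers cost $o(n)$. Inside a bucket I would not use pointers: sub-buckets are stored contiguously with an implicitly encoded layout, and the offsets are computed from the packed prefix sums. Keeping these offsets updatable in $\O(1)$ worst-case time requires shifting a contiguous block of $\O(w)$ bits per level on insertion. This is possible because each level's memory region spans $\O(1)$ words once the element encodings are quotiented. Verifying this word-count bound at every level is the delicate step I expect to dominate the proof.
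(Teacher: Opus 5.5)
Your proposal has a genuine gap at its foundation, the base case. In the standard Word-RAM model this theorem is about, \cite{domingues2026compressing} does \emph{not} give constant-time \RANK with $\O(n\log n)=\O(n\log w)$ redundancy. That bound is only known in the cell-probe model, via \cite{ajtai1984hash}. The standard Word-RAM \RANK index for $n=w^{\O(1)}$ elements costs $\O(nw^\eps)$ bits (Theorem~\ref{thm:domingues}). What prior work gives with $\O(n\log n)$ extra bits is only the \SELECT/sequence part (Lemma~\ref{lemma:previous}).

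So the case $k=1$ is itself new, and its whole difficulty is \RANK. The paper overcomes it with an idea your proposal has no counterpart for. It stores the top $\lg n+\ell$ bits, with $\ell=\Theta(\sqrt w)$, in the small-universe structure of Theorem~\ref{thm:small-universe}. That structure exposes miniblocks of $\Theta(w^{1/4})$ consecutive prefixes, each readable in $\O(1)$ time. Within a miniblock, one of two cases holds (Observation~\ref{lemma:exploit}):
\begin{itemize}
\item all $\ell$-bit prefixes are distinct, and \RANK follows from the packed prefixes plus one \SELECT; or
\item some prefix repeats, and its $\ell$ bits are overwritten by a local \RANK index of $\O(w^{1/4}\cdot w^{1/4})\le\ell$ bits.
\end{itemize}
This supports \RANK with only $\O(n)$ extra bits, independent of $k$. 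Without something like it, your redundancy claim for \RANK is unsupported at every level of your recursion, not just at the base.

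The inductive step also fails as written, for two reasons.
\begin{itemize}
\item Splitting $[u]$ into $n/b$ buckets by high-order bits does not bound bucket occupancy. A clustered or adversarial set can put $\Theta(n)$ elements in one bucket, and there the level-$(k-1)$ structure costs $\O(n_i\log^{(k-1)}n_i)$ bits, not $\O(n_i\log^{(k)}n)$. Splitting by rank instead restores size control, but it loses the implicit high bits and brings back predecessor search over bucket boundaries, i.e., the \RANK problem above.
\item Your memory-management fix is unsound. In the relevant regime $\lg(u/n)=\Theta(w)$, even $\log n$ quotiented elements occupy $\Theta(\log n)$ words. So the claim that each level's region spans $\O(1)$ words is false, and shifting contiguous storage on insertion is not $\O(1)$ time.
\end{itemize}

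The paper never shifts element payloads. It keeps them in an out-of-order array $A^*$ of fixed-width slots and encodes only the \emph{permutation}. Insertions and deletions are absorbed in buffer blocks at the end of each superblock of $\Theta(\log^4 n)$ elements. Whole blocks of $\log^2 n$ elements are moved lazily to and from a buffer region. The recursion is on the permutation within a superblock, which is again a dynamic sequence. This gives $\O(n\log^{(k-1)}(\log^4 n))=\O(n\log^{(k)}n)$ index bits and $\O(k)$ time (Theorem~\ref{thm:sequence}). Elias--Fano then closes the gap from $n\lg u$ to $\lg\binom{u}{n}$.
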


For $k=1$, this has already been achieved in~\cite{domingues2026compressing},
using the cell-probe $\RANK$ data-structure of~\cite{ajtai1984hash},
but only $\O(nw^\eps)$ in the standard Word-RAM (for constant $\eps\leq 1/2$).
Moreover,~\cite{domingues2026compressing} views the dictionary as two separate data-structures: one constant-time data-structure for $\SELECT$ (essentially the sorted sequence of elements), and one for $\RANK$ that can access the $\SELECT$ data-structure (called a $\RANK$ index).
We will improve this construction for $\SELECT$ using a recursive construction of depth $k$ on the permutation of the elements in the sequence.
Some care is required to manage memory in $\mathcal{M}_B$ using only $\O(n\log^{(k)}n)$ additional bits, within $\O(k)$ time.
We employ
the Elias-Fano encoding~\cite{elias1974efficient,fano1971number,carter1978exact,pibiri2017dynamic,pibiri2020succinct}
to reduce the number of bits for the elements stored in the sequence.

To implement $\RANK$, we perform a non-trivial modification to this data-structure for a universe $\log u=\O(\sqrt{w})$,
that results in a dynamic sequence 
using only $\O(n)$ redundancy,
and that further allows us to access a contiguous set of $\Theta(w^{1/4})$ elements in $\O(1)$ time.
Consequently, we can compute $\RANK$ in this set in $\O(1)$ time.
For a general universe, we take the prefix of $\ell=\Theta(\sqrt{w})$ bits of the elements and store it separately using this idea.
The main trick is the following observation:
for a set of $\Theta(w^{1/4})$ elements,
either all prefixes are distinct,
in which case we compute $\RANK$ only by the prefixes,
or there is a repetition,
in which case we reuse the bits in one of the repeated prefixes
to store a $\RANK$ index.
This adds support for $\RANK$ and only 
incurs $\O(n)$ redundancy, $\O(1)$ time, and $\O(1)$ accesses to $\SELECT$.

Lastly,
we extend these results to a fully indexable dictionary (FID) using similar ideas.
A FID is a dictionary that supports $\RANK$ and $\SELECT$ for both $S$ and its complement $\overline{S}=[u]\setminus S$.
We claim the following theorem.

\begin{restatable}{theorem}{ThmFID}\label{thm:FID}
For every $1\leq k\leq \log^*n$,
there is a dynamic FID
for $n=w^{\O(1)}$ elements
using $\O(n\log^{(k)}n)$ redundancy
and taking $\O(k)$ time for all operations.
\end{restatable}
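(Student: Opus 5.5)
We build the FID on top of the dictionary of \Cref{thm:dictionary}. It already supports $\RANK(S,\cdot)$, $\SELECT(S,\cdot)$ and updates in $\O(k)$ time with $\O(n\log^{(k)}n)$ redundancy. For $x\in[u]$ we have $\RANK(\overline S,x)=x+1-\RANK(S,x)$, so rank on the complement is free. The only new operation is $\SELECT(\overline S,i)$, and the plan is to add an auxiliary index for it that costs $\O(n)$ bits (or fewer), $\O(1)$ accesses to $\SELECT(S,\cdot)$, and $\O(1)$ extra time, paralleling how the $\RANK$ index is added to the $\SELECT$ structure.

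The reduction is as follows. Let $s_1<\dots<s_n$ be the elements of $S$, and set $g_j=s_j-j$, the number of complement elements below $s_j$. The sequence $g_j$ is nondecreasing. We need the largest $j$ with $g_j<i$; then $\SELECT(\overline S,i)=i+j-1$. So complement-select is exactly a $\RANK$ query, with argument $i-1$, on the multiset $G=\{g_1,\dots,g_n\}\subseteq[u-n+1]$. Moreover $g_j$ is computable from $\SELECT(S,j)$ in $\O(1)$ time.

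I would therefore reuse the $\RANK$-index construction from \Cref{thm:dictionary}, applied to the virtual sequence $g_j$ rather than $s_j$, accessed through the existing $\SELECT$. First take the prefix of $\ell=\Theta(\sqrt w)$ bits of each $g_j$. Then, for each block of $\Theta(w^{1/4})$ consecutive indices, use the dichotomy from the paper. Either all prefixes are distinct, in which case rank is determined by the prefixes. Or a prefix repeats, in which case the repetition frees bits that can hold a local rank index.

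This is the main obstacle. In the original structure the freed bits sit inside the stored representation of the $s_j$ themselves. Here the $g_j$ are not stored, so their repetitions do not obviously free any space. I would handle this in one of two ways:
\begin{enumerate}
\item Observe that prefixes of $g_j$ and of $s_j$ differ by at most the prefix of $j$ plus a carry. Hence within a block of length $\Theta(w^{1/4})\ll 2^{\lg u-\ell}$, a repeated $g$-prefix forces $s$-prefixes that are consecutive or repeated. This pattern can be encoded with savings, by storing one $s$-prefix and small deltas, in the same slot where the dictionary would otherwise store them.
\item Alternatively, store the $G$-rank index explicitly at $\O(1)$ bits per element. Here one uses the fact that each block's answer depends only on $\O(w^{1/4}\log w)$ bits of high-order gap information, which can be kept in an $\O(n)$-bit side structure with Elias--Fano style encoding. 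Its cost is absorbed because $n\le n\log^{(k)}n$.
\end{enumerate}

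Finally, updates must be handled. An insertion or deletion of $x$ shifts $g_j$ by $\pm1$ for all later $j$. However, the index only stores information that is local to blocks and relative to block heads, with the global offsets recovered through $\SELECT(S,\cdot)$. So an update changes only $\O(1)$ blocks' local data. The global searches for the block containing the answer then go through the same $\O(k)$-time machinery (the same top-level search used for $\RANK$ of $S$), and memory management in $\mathcal M_B$ is unchanged. Summing up gives total redundancy $\O(n\log^{(k)}n)+\O(n)$ and time $\O(k)$ for all operations.
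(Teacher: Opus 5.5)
Your reduction is the paper's (Grossi et al.): complement-$\SELECT$ becomes a $\RANKM$ query on the nondecreasing counts $g_j$ of complement elements below $s_j$. One index slip: that count is $s_j-j+1$, not $s_j-j$; with $g_j=s_j-j$ your formula returns $1$ instead of $0$ for $S=\{1\}$, $i=1$. You also locate the obstacle correctly. However, neither proposed fix removes it.

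Fix (2) breaks the space bound. $\Theta(w^{1/4}\log w)$ bits for each of the $n/w^{1/4}$ blocks totals $\Theta(n\log w)$, not $\O(n)$. Moreover, a constant-time rank index with $\O(\log w)$ bits per key is not known in the standard Word-RAM (\Cref{thm:domingues} pays $w^{\eps}$ bits per key). That is exactly why the prefix trick is needed at all.

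Fix (1) keeps the dichotomy on the $g$-prefixes, which are never stored. In the all-distinct branch you must rank among the $g$-prefixes of the whole block. Each equals $p_j$ or $p_j-1$, where $p_j$ is the stored $s$-prefix. Which one it is depends on a borrow out of the low $\delta-\ell$ bits, and those sit in the \Cref{thm:sequence} structure, so recovering them costs $\Theta(w^{1/4})$ accesses. Worse, a block's case depends on the absolute values $g_j$, and every update shifts all later $g_j$ by one. This can create or destroy $g$-prefix collisions in every later block, so your claim that an update changes only $\O(1)$ blocks' local data fails for this classification.

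The missing idea is to apply \Cref{lemma:exploit} to the stored $s$-prefixes ($\lg n+\ell$ bits, including the Elias--Fano header). Their classification is unaffected by updates outside the miniblock.
\begin{itemize}
\item If some $s$-prefix repeats, the duplicate's $\ell$ bits are freed and hold an $\O(s\,w^{1/4})$-bit partial-sum ($\RANKM$) index for $Z$ on that miniblock. No delta encoding is needed.
\item If all $s$-prefixes are distinct, then $j\le n\le 2^{\delta-\ell}$ (when $\lg u\ge 2\lg n+\ell$), so each $Z$-prefix is $p_j$ or $p_j-1$. For a query with prefix $q$, only the at most two elements with $p_j\in\{q,q+1\}$ are ambiguous, and $\O(1)$ calls to $\SELECT$ settle them.
\end{itemize}

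Two further pieces are also missing.
\begin{itemize}
\item The search for the right miniblock cannot reuse the $\RANK$ index over block boundaries of $S$, because the head values of $Z$ shift by $\pm1$ under every earlier update. The paper instead uses the partial-sum search structure of \Cref{lemma:partial-sum}, instantiated with \Cref{thm:domingues} at $\eps=1/4$, costing $\O((n/s)w^{1/4})=\O(n)$ bits.
\item When $\lg u<2\lg n+\ell$, i.e.\ $\log u=\O(\sqrt w)$, the borrow argument fails. The paper handles this regime separately, computing $Z$ on the fly inside \Cref{thm:small-universe} by parallel subtraction on a packed block.
\end{itemize}
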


Using the cardinality reduction lemma from~\citep[{\S3}]{domingues2026compressing} to the previous result for sets of $\Theta(w^2)$ elements, we extend these results to arbitrary $n$.

\begin{theorem}\label{thm:general}
For every $1\leq k\leq \log^*w$,
there is a dynamic FID
for $n$ elements
using $\O(n\log^{(k)}w)$ redundancy
and taking $\O(\log_wn + k)$ time for all operations.
\end{theorem}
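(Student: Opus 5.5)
The plan is to reduce \Cref{thm:general} to \Cref{thm:FID} with the cardinality reduction lemma of~\citep[{\S3}]{domingues2026compressing}, used as a black box. I take that lemma in the following form. Suppose we have a dynamic FID for sets of at most $m=\Theta(w^2)$ elements from some universe, with redundancy $R(m)$ and time $T$. Then we get a dynamic FID for $n$ elements with redundancy $\O(n\cdot R(m)/m + n)$ and time $\O(T+\log_w n)$.

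The lemma works roughly as follows. It partitions $S$ into consecutive blocks of $\Theta(m)$ elements, each block re-encoded relative to its own subuniverse. The block boundaries are kept in a B-tree with fan-out $w^{\Omega(1)}$, which gives the $\O(\log_w n)$ term. Splits and merges are de-amortized so that updates are worst-case. The top-level structure costs $\O(n/m\cdot w)=\O(n/w)$ bits. The information loss from fixing the boundaries and splitting $\lg\binom{u}{n}$ into a sum over blocks of $\lg\binom{u_i}{n_i}$ is $\O(\log u)$ per block, i.e.\ $\O(n/w)$ overall, since $\log u\le w$.

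Given this, the first step is to instantiate \Cref{thm:FID} with $m=\Theta(w^2)=w^{\O(1)}$ elements per block. This gives redundancy $\O(m\log^{(k)} m)$ and time $\O(k)$ for each block, valid for $1\le k\le\log^* m$.

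The second step converts the parameters. Since $\log m=2\log w+\O(1)$, for $k\ge 2$ we have $\log^{(k)} m=\log^{(k)} w+\O(1)$, and for $k=1$ we have $\log m=\O(\log w)$. In either case $\log^{(k)} m=\O(\log^{(k)} w)$. Also $\log^* m\ge\log^* w$, so every $k\le \log^* w$ is admissible. The blocks then contribute $\O((n/m)\cdot m\log^{(k)} w)=\O(n\log^{(k)} w)$ redundancy in total, which absorbs the $\O(n)$ overhead of the reduction.

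The third step handles small sets. When $n<m$, we apply \Cref{thm:FID} directly, and its redundancy $\O(n\log^{(k)} n)$ is at most $\O(n\log^{(k)} w)$ because $n\le w^2$.

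Finally, the operations and memory. An operation descends the B-tree in $\O(\log_w n)$ time and then makes $\O(1)$ calls to a block structure, each taking $\O(k)$ time, for $\O(\log_w n+k)$ in total. Complement $\RANK$ and $\SELECT$ are handled by storing, in each B-tree node, counts of elements together with the subuniverse lengths. Then $\RANK(\overline S,x)=x+1-\RANK(S,x)$, and $\SELECT$ on $\overline S$ navigates by (universe span $-$ count) and then delegates to the block's complement operation. The blocks live in $\mathcal{M}_B$ memory, with variable-size allocation managed as in~\cite{domingues2026compressing}.

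The main obstacle I expect is checking that the reduction's per-block redundancy accounting stays additive. The block universes $u_i$ vary and can be as large as $2^w$, so \Cref{thm:FID} has to hold uniformly for any universe with $\log u_i\le w$. The reduction's own overhead must also be $\O(n)$ rather than $\Omega(n\log w)$; for example, boundary keys must not be stored explicitly at a cost of $w$ bits each for too many blocks. With blocks of size $w^2$ and $\O(w)$ bits of overhead per block, this works out to $\O(n/w)$, so the bound should hold.
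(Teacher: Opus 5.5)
Your proposal is correct and is essentially the paper's own proof. The paper obtains \Cref{thm:general} by feeding \Cref{thm:FID} on sets of $\Theta(w^2)$ elements into the cardinality reduction of~\cite{domingues2026compressing}, implicitly with $r=\log^{(k)}w$ and $t=k$, and your checks that $\log^{(k)}(w^2)=\O(\log^{(k)}w)$ and $\log^* w\le\log^*(w^2)$ are exactly the parameter conversion the paper leaves unstated. Your separate small-$n$ case is unnecessary, because the cited reduction is stated for arbitrary $n$; if you keep it, cap $k$ at $\log^* n$ there, since \Cref{thm:FID} only covers $k\le\log^* n$.
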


For $k=\O(1)$,
this is the first deterministic dynamic FID with $o(n\log w)$ bits with time $\O(\log_wn)$ for all operations.
For $k=\log^*w$, this achieves $\O(n)$ redundancy.
This is unlikely to be improved upon using our techniques,
as we employ the Elias-Fano encoding,
which always requires $\Omega(n)$ bits of redundancy (cf.~\cite{pibiri2017dynamic,pibiri2020succinct}).
Moreover, the de-amortization of the cardinality reduction of~\cite{domingues2026compressing} incurs an additional $\O(n)$ redundancy.

\subsection{Lower Bound}

We prove the following theorem.
Here, a \emph{dictionary} only supports membership and updates.
Our proof is a modification of the techniques of~\cite{li2023tight} to handle the case $u\geq n^2$.
In fact, the relevant case is where $n=w^{\O(1)}$ and $\log u=\Theta(w)$.

\begin{theorem}[Slight generalization of~\cite{li2023tight}]\label{thm:lower-bound}
For $1\leq k\leq\log^*n$
and $\log u=\Theta(w)$,
a dictionary storing $n\leq \sqrt{u}$ elements
using $R=\O(n\log^{(k)}n)$ redundancy
must take at least $\Omega(k)$ amortized time in expectation
to answer membership queries or updates.
\end{theorem}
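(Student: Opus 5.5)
We follow the proof strategy of~\cite{li2023tight}: an information-transfer/encoding argument over a hard distribution of operation sequences, inductive in $k$. The hard instance works on a set of size $n$ drawn from the universe. We insert $n$ random keys in a random order. We then interleave query and update phases. For each level $j\le k$ we pick a scale: epochs of geometrically shrinking length $n_j$, chosen so that $\log(n/n_j)$ behaves like $\log^{(j)} n$. The plan is to show that if operations take $t$ probes on average, then the redundancy needed is at least roughly $n\log^{(t)} n$. Equivalently, redundancy $\O(n\log^{(k)}n)$ forces $t=\Omega(k)$.

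The key steps are as follows.
\begin{enumerate}
\item Fix the memory state after a random set $S$ of $n$ keys has been inserted. Use the standard encoding argument: any data-structure with redundancy $R$ yields an encoding of $S$ using $\lg\binom{u}{n}+R$ bits. Probe-saving shortcuts would then compress $S$ below entropy.
\item Consider cells probed by queries and updates, classified by the epoch in which they were last written. Show that a query for a random key that takes $o(k)$ probes must, at some level $j$, read few cells written during epoch $j$.
\item An encoder that knows the cells of the other epochs can then decode the keys inserted during epoch $j$. It does this by simulating membership queries on candidate elements, sending only a short hint (addresses and contents of the few relevant cells, plus the identity of the inserted keys among the positive answers). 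This saves about $n_j\log^{(j)} n$ bits relative to entropy, contradicting $R=\O(n\log^{(k)}n)$.
\end{enumerate}

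The adaptation to super-polynomial universes comes next. In~\cite{li2023tight}, the decoder queries every universe element, which costs a $\log u=\O(\log n)$ term per useful bit. Here $\log u=\Theta(w)$ may be much larger than $\log n$. Instead of enumerating $[u]$, we give the decoder a random \emph{candidate set} $C$ of size $\poly(n)$ (shared randomness), which contains the epoch's keys together with random decoys. We encode $S$ relative to $C$. The entropy of identifying which members of $C$ belong to $S$ is $\approx n_j\log(|C|/n_j)$, comparable to the polynomial-universe case. The condition $n\le\sqrt u$ ensures $\lg\binom{u}{n}$ is $\Theta(n\log u)$ and that random decoys are distinct from $S$ with high probability. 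Since $\log u=\Theta(w)$, cell addresses and contents cost $\O(w)=\O(\log u)$ bits per probed cell. This is charged against the $\log(u/n)=\Theta(\log u)$ bits of entropy per key that a revealed key carries. So the accounting of~\cite{li2023tight} goes through with $\log u$ replacing $\log n$ in the per-cell cost, and $\log n$-type terms only in the savings.

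Because we only need $R=\Omega(n)$, we can afford to lose constant factors per level. We also avoid conditioning on the event ``few probes into epoch $j$'': we average the probe counts over the random query and epoch choice by linearity (the issue in \Cref{appendix:inner}). We then apply Markov's inequality once per level, giving an unconditional expected saving.

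The main obstacle is step 3 in the large-universe setting. The decoder must recover which keys of epoch $j$ were inserted using only polynomially many simulated queries, and the hint must not be forced to describe keys by their full $\log u$ bits. I would first try to make the decoys part of the insertion distribution itself. Each epoch would insert a uniformly random subset of a publicly known random pool of size $n_j^2$. This makes the information content $\Theta(n_j\log n_j)$ while still exercising the full universe. It then remains to check that the resulting entropy gap still dominates the $\O(n\log^{(k)}n)$ redundancy.
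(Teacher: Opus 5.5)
\textbf{There is a genuine gap at the core of your plan.} Your steps 2--3 are a chronogram-style argument: a query must read cells last written in epoch $j$, or else the decoder recovers epoch-$j$ keys cheaply. That template cannot give $\Omega(k)$ for membership. A query for $x$ depends only on the epoch in which $x$ was inserted, so it has no reason to touch cells of the other $k-1$ epochs; hash tables answer in $\O(1)$ probes. Also, decoding the keys of epoch $j$ saves nothing, since the memory must encode them anyway. So $R$ never enters your contradiction in a real way.

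The missing idea, which the paper takes from~\cite{li2023tight}, is \emph{insertion history}:
\begin{itemize}
\item The Outer Lemma counts repeated probes, i.e., cells probed by the meta-operations (queries \emph{and} updates) of more than one segment.
\item If these are rare, the final state $C_{\text{end}}$ reveals which segment each key was inserted in. Once the final key set is known, $C_{\text{end}}$ costs only $R$ bits, and a short message suffices on top of it.
\item Bob already knows all inserted keys $A$ and must learn their partition into segments, which has entropy $\log(m!/(\mu!)^\lambda)\approx m\log\lambda$.
\item For each segment he simulates every candidate insertion set $B$. He keeps the set $\mathcal{Q}_i$ of those whose result agrees with $C_{\text{end}}$ on the probed cells, up to $\mu/16$ cells.
\item The Inner Lemma shows two candidates with overlap $<\mu/2$ both pass with probability at most $u^{-\mu/16}$. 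This survives a union bound over $\binom{m}{\mu}$ sets, so indexing $A_i$ inside $\mathcal{Q}_i$ saves about $(\mu/2)\log(\lambda/64)$ bits.
\item This forces $R\gtrsim m\log\lambda$ unless there are at least $m/64$ repeated probes.
\item Applying this at every node of a tree with degrees $\lambda_i=\log^{(i)}n/\log^{(i+1)}n$ gives $\Omega(n)$ repeated probes per level over $\Theta(k)$ levels. Each repeated probe is charged to essentially one level, up to a factor of $2$.
\end{itemize}
None of these ingredients appears in your outline: repeated probes, partition entropy, the consistency test against $C_{\text{end}}$, and per-level charging.

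\textbf{Your large-universe adaptation also fails.} In an encoding argument the decoder's computation is free, so querying every element of $[u]$ costs nothing and is not the obstruction. Your proposed fix draws each epoch's keys from a public pool of size $n_j^2$. That cuts the per-key entropy to $\O(\log n)$. Your own accounting, however, needs $\Theta(\log u)$ bits per key to pay for the $\Theta(w)=\Theta(\log u)$ bits of each revealed cell. In the regime that matters ($n=w^{\O(1)}$, $\log u=\Theta(w)$) this breaks.

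The paper does the opposite:
\begin{itemize}
\item Keys stay uniform over $[u]$, so in the Inner Lemma each key carries about $\log u$ bits against $w$-bit cell contents. This is what yields the $u^{-\mu/16}$ bound.
\item The assumption $n\le\sqrt{u}$ bounds the memory to $2n$ cells, so sets of probed cells cost $\O(n)$ bits. It also keeps $\log n\le\frac12\log u$ in the final inequality.
\item $D$ is sent as an index into $K\cup B$.
\item In the Outer Lemma the universe size never enters, because Bob already knows the keys.
\end{itemize}
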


This result,
together with the $\Omega(\log_w n)$ lower bound of~\cite{fredman1989cell},
shows that our construction in~\Cref{thm:dictionary} is optimal for the regime $n=w^{\O(1)}$.
The remaining challenge is to consider the regimes that can achieve $o(n)$ redundancy.
The case $R<n$ is handled by~\cite{dambrosio2026always},
and shows a lower bound of $\Omega(\log(n/R))$ time, using a more involved statement.

The proof of~\cite{li2023tight} considers the following hard instance for a sequence of operations.
Let $K$ and $K'$ be two uniformly random disjoint subsets of $[u]$.
Let $D$ and $E$ be uniformly random permutations of $K$ and $K'$, respectively.
The hard instance inserts all elements of $K$ into the dictionary, then, for every $i$ in $[n]$, the instance queries $D[i]$, deletes $D[i]$, and inserts $E[i]$.
We refer to the sequence of query, deletion, and insertion, as a \emph{meta-operation}.

Our goal is to show the following lemma, which implies~\Cref{thm:lower-bound}.
By Yao's minimax principle~\cite{yao1977probabilistic}, we can assume that the dictionary is deterministic.

\begin{restatable}[Slight generalization of~\cite{li2023tight}]{lemma}{LemmaProbe}\label{lemma:probe}
For $1\leq k\leq\log^*n$
and $\log u=\Theta(w)$,
a dictionary storing $n\leq \sqrt{u}$ elements
using $R=\O(n\log^{(k)}n)$ redundancy
must take $\Omega(nk)$ cell probes in expectation for all meta-operations in the hard instance.
\end{restatable}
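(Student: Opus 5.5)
We will follow the information-transfer framework of Li, Liang, Yu, and Zhou, arguing by induction on $k$ and using the hard instance of meta-operations described above. We work with a deterministic dictionary, which Yao's principle permits.

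\textbf{Encoding setup.} Fix a contiguous block of $m$ meta-operations. The set $K$ together with the $E[i]$'s inserted so far determines the memory state. We say that a cell probed while handling $D[i]$ \emph{carries information} if it was last written by a meta-operation in an earlier window. If few cells carry information, we get an encoding argument. An encoder who knows the memory before the window, plus the addresses and contents of the information-carrying probes, can simulate the window. It can then recover the queried elements $D[i]$ by a membership test on the state, namely by deciding which candidate elements are present. These elements have total entropy about $m\log(u/n)$ bits.

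\textbf{Redundancy comparison.} The memory state after the window uses only $\lg\binom{u}{n}+R$ bits. The state together with the transcript must determine both the new set and the deleted elements. This forces the transcript to be long unless the redundancy is large. Concretely, the number of information-carrying probes $P$ over the window must satisfy roughly $P\cdot w \gtrsim m\log(u/n) - R_{\mathrm{window}}$. Here $R_{\mathrm{window}}$ is the share of redundancy usable in the window.

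\textbf{Adapting to super-polynomial $u$.} The crucial point is that with $\log u=\Theta(w)$ each element costs about one cell. So the gain per element must be measured in $\log(\text{address space of probes})$ rather than in $w$. We will encode probe addresses relative to the set of cells written in the window, which costs $O(\log m)$ bits each rather than $w$. We then compare this cost against the $\log\binom{u}{n}$ slack. The condition $n\le\sqrt u$ gives $\log(u/n)\ge\tfrac12\log u$, so each deleted element still carries $\Theta(w)$ bits.

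\textbf{Induction.} We build a hierarchy of windows of sizes $m_j$ with $m_{j+1}\approx 2^{m_j}$, or equivalently $\log m_{j+1}\approx m_j$. The redundancy level $\log^{(k)}n$ allows at most $\log^{(k)}n$ extra bits per element. At each level $j$ we show that, unless $\Omega(1)$ probes per meta-operation go to cells written at timescale $m_j$, the encoding beats the bound $\lg\binom{u}{n}+R$. Summing over the $k$ levels gives $\Omega(k)$ probes per meta-operation, hence $\Omega(nk)$ in total.

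\textbf{Main obstacle.} We expect the hardest step to be the conditioning on prefix states. The original proof conditioned on events that altered the distribution of $K$, and this is the issue noted in the appendix. Because we only need the regime $R=\Omega(n)$, we plan to avoid that conditioning. Instead we will use a direct averaging (chain-rule) argument over windows. This bounds the expected mutual information between each window's queries and the probes into earlier timescales, rather than conditioning on specific good events.
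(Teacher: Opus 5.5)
Your proposal has a genuine gap at its core. The encoding step, and the inequality $P\cdot w\gtrsim m\log(u/n)-R_{\mathrm{window}}$ you derive from it, are not right, and they are not what drives this lower bound. In the hard instance every queried key $D[i]$ lies in a key set that a decoder knowing the pre-window memory already has. Every membership query answers ``yes'', and the $D[i]$ carry at most about $m\log n$ bits, not $m\log(u/n)$. More decisively, an inequality of that shape overproves. Take $k=1$, $R=O(n\log n)$, $n=w^{O(1)}$ and $\log u=w$. Since $\log(u/n)=\Theta(w)$, every window with $m\gg R/w=O(n\log w/w)$ would need $\Omega(m)$ information-carrying probes. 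Sum this over a binary hierarchy on the $\Theta(\log w)$ scales between $n\log w/w$ and $n$, counting each probe once at the node where its read and its last write separate. This gives $\Omega(\log w)$ probes per meta-operation, contradicting the $O(1)$-time, $O(n\log n)$-redundancy construction of \Cref{thm:dictionary} with $k=1$. Your induction also has no mechanism that makes exactly $\Theta(k)$ levels count. Moreover, the tower $\log m_{j+1}\approx m_j$ on window \emph{sizes} is the wrong hierarchy. Almost all of those windows are far smaller than $\max\{R,n\}/\log n$, and a window of $m$ operations split into $\lambda\le m$ parts offers only about $m\log\lambda\le m\log m$ bits to play against $R\ge n$.

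The missing idea is that the redundancy must pay for \emph{temporal} information, via the paper's Outer Lemma.

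\emph{What Bob must learn.} Bob knows the initial state, the set of all inserted keys, the deletion sequence and the within-segment orders. He must learn which of the $\lambda$ segments each inserted key went into, which is $\log(m!/(\mu!)^\lambda)\approx m\log\lambda$ bits.

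\emph{How Alice encodes it.} Alice spends $R$ bits on $C_{\text{end}}$. Take a segment probing fewer than $\mu/16$ repeated cells, meaning cells probed in several segments. Then $A_i$ lies in the family $\mathcal{Q}_i$ of candidate sets whose simulated segment nearly agrees with $C_{\text{end}}$ on its probed cells.

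\emph{The Inner Lemma.} Two insertion sets overlapping in fewer than $\mu/2$ keys yield states agreeing on all but $\mu/8$ common probed cells with probability at most $u^{-\mu/16}$. This confines $\mathcal{Q}_i$ to sets sharing at least $\mu/2$ keys with $A_i$, which saves $\frac{\mu}{2}\log(\lambda/64)$ bits per such segment. Hence $R\gtrsim m\log\lambda$ unless there are at least $m/64$ repeated probes.

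The Inner Lemma is also where $\log u=\Theta(w)$ and $n\le\sqrt u$ really enter, not relative addressing of probes. The $\mu/8$ differing cell contents cost $(\mu/8)\log u$ bits, against about $\frac{\mu}{2}\log(u/\mu)$ bits of entropy in the keys of $B$ not in $A$. Subsets of the at most $2n$ memory cells cost only $O(n)$ bits. The restriction $\mu\ge 64\max\{R,n\}/\log n$, which is where $R\ge n$ helps, removes any need to bound $|S_{D,A}|$, and this is how the conditioning issue is side-stepped.

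Finally, the tree is built on branching factors: $\lambda_i=\log^{(i)}n/\log^{(i+1)}n$ with node sizes $m_i=128\max\{R,n\}/\log^{(i+1)}n$. This makes $m_i\log\lambda_i\approx\max\{R,n\}$ at every level. The constraint $m_i\le n$ forces $\log^{(i+1)}n\gtrsim R/n=O(\log^{(k)}n)$, giving $\Theta(k)$ levels. Each level contributes $n/64$ repeated probes, charged $1/2$ each to avoid double counting across levels.
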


\section{Preliminaries}

\paragraph*{Word-RAM model.}
We employ the Word-RAM model, where the memory is an array of $2^w$ words of $w$-bits~\cite{yao1981should,fredman1989cell};
memory reads and writes of a single word take $\O(1)$ time.
We restrict all $w$-bit operations to standard operations (addition, bitwise operations, shifts), and multiplication~\cite{patrascu2014dynamic,domingues2026compressing}.
We allow using lookup tables of $\polylog w$ words
(i.e., $\O(w\polylog w)$ bits),
as long as they are computable in linear time.
Many operations, such as $\MSB$ can be implemented using a combination of lookup tables and multiplication.

\paragraph*{Memory model.}
For dynamic memory management,
we use the $\mathcal{M}_B$ model of~\cite{raman2003succinct},
which defines the space used by a data-structure
to be the largest address in the memory array of words
that it uses (we measure it in bits).
Namely,
the last used cell is what matters for the memory usage,
even if we do not use some intermediate cells.
The main difficulty in this model is to concatenate two data-structures~\cite{li2023dynamic,li2024dynamic}.
As~\cite{domingues2026compressing} points out, it is possible to concatenate two structures that take $s_1$ and $s_2$ bits, respectively, using $s_1+\O(s_2)$ bits.

\paragraph*{Lazy updates.}
For many of our data-structures,
it is easier to describe updates in an amortized fashion.
For those cases, there are standard techniques to de-amortize these by performing the updates piecemeal~\cite{willard2000examining,andersson2007dynamic,domingues2026compressing}.
One recurring instance of this is when splitting a sequence $A$ into segments of $\Theta(s)$ elements, for a given parameter $s$.
If the size of a segment halves/doubles, we need to merge/split them appropriately.
To do this lazily, at every access/update to these segments, we move $\O(1)$ elements from one segment to another.
We choose the constant large enough so that this procedure finishes before any of the segments want to merge/split again.
See~\cite{domingues2026compressing} for details.

\subsection{Previous constructions}

We give a short overview of the techniques described in~\cite{domingues2026compressing} that will be relevant for this construction.

\paragraph*{Rank indices.}
Previous constructions~\cite{patrascu2014dynamic,domingues2026compressing}
decomposed the problem into two data-structures:
\begin{enumerate}[label=(\Roman*)]
\item The $\SELECT$ dictionary stores the set $S$ and supports $\SELECT$ and update.
\item The $\RANK$ index accesses the $\SELECT$ dictionary and supports $\RANK$ and update. 
\end{enumerate}
We cite the compressed $\RANK$ index of~\cite{domingues2026compressing}, which we will use in our constructions as a black-box.

\begin{theorem}[Compressed $\RANK$ index~\cite{domingues2026compressing}]\label{thm:domingues}
For every constant $0<\eps\leq 1/2$, there is a $\RANK$ index for $n=w^{\O(1)}$ elements using $\O(nw^\eps)$ bits and constant-time $\RANK$ and update, using $\O(1)$ accesses to the $\SELECT$ dictionary.
\end{theorem}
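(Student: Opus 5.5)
The plan is to build the $\RANK$ index as a B-tree of constant depth whose nodes are \emph{dynamic fusion nodes} in the style of Pătrașcu and Thorup~\cite{patrascu2014dynamic}. The key difference from their construction is that the nodes do not store the keys themselves: whenever a full key is needed, it is fetched by one access to the $\SELECT$ dictionary. Set the fanout to $b=\Theta(w^{\eps})$. Since $n=w^{\O(1)}$, the tree has depth $\O(1/\eps)=\O(1)$. Each leaf covers a contiguous run of $\Theta(b)$ positions of the sorted sequence. Each internal node stores, for its children, the prefix sums of their subtree sizes; these are $b$ counters of $\O(\log n)=\O(\log w)$ bits, which fit in $\O(1)$ words since $b\log w\le w$. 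Splitting and merging of nodes whose sizes drift by a constant factor is handled lazily, as described in the paragraph on lazy updates. Because the depth is constant, propagating a size change to the ancestors costs $\O(1)$ time, using word-parallel addition on the packed prefix sums.

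Each node, leaf or internal, represents $\le b$ keys: for an internal node these are the minima of the children, and for a leaf they are the elements themselves. The node stores a compressed-trie sketch of these keys. Concretely, it keeps the set of at most $b-1$ branching bit positions, each of $\O(\log w)$ bits, together with one sketch of $\O(b)$ bits per key, in which the bits that are irrelevant for a key are marked as ``don't care''. This gives $\O(b^2+b\log w)$ bits per node and hence $\O(b)=\O(w^{\eps})$ bits per key. The total over all nodes is therefore $\O(nw^{\eps})$, since the leaves contribute $n$ keys and the upper levels a geometrically smaller number. The restriction $\eps\le1/2$ guarantees that all sketches of one node fit in $\O(b^2)\le\O(w)$ bits, i.e.\ in $\O(1)$ words, which is what makes the word-parallel operations below possible.

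A $\RANK(x)$ query descends from the root. At each node, I would first extract the sketch of $x$ on the branching positions, using multiplication and masking, and compare it in parallel against all key sketches to obtain a candidate rank $i$. Next, I would retrieve the true key at position $i$ with one $\SELECT$ access; the position is computed from the stored prefix sums. From there I would compute $\MSB$ of $x$ XOR that key, and redo the parallel comparison with the mask determined by that bit, which yields the exact child. This costs $\O(1)$ time and $\O(1)$ $\SELECT$ accesses per level, hence $\O(1)$ overall.

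An insertion or deletion is located by the same search. It then changes one key's sketch at one leaf, possibly adds or removes one branching position, and updates the prefix sums on the root-to-leaf path. I expect the main obstacle to be performing this sketch maintenance in $\O(1)$ worst-case time using only standard Word-RAM operations. Inserting a new branching position means inserting a bit column into every sketch in the node. My first attempt would be to reuse Pătrașcu–Thorup's ``don't care'' representation, in which adding a column amounts to inserting a bit at one index in each packed sketch; this is done by word-parallel shifts under a mask computed from the rank of the new position, and only the keys in the affected subtrie have their bit at that column set explicitly. The second ingredient needed in this step is a small lookup table of $\polylog w$ words for $\MSB$ and for rank within a small sorted list of bit positions.
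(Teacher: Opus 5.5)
\textbf{There is a genuine gap, in the query-sketching step.} The overall structure you propose is sound: a constant-depth B-tree of fanout $b=w^{\eps}$, keyless nodes holding only compressed keys with don't cares, packed prefix sums to locate keys, $\O(1)$ \SELECT accesses per level, and word-parallel column insertion for updates. The gap is the step you treat as routine: extracting, in $\O(1)$ time, the bits of the query $x$ at a node's branching positions \emph{using multiplication and masking}. Those positions are arbitrary in $[w]$ and change under updates. The standard constant-time way to gather them is Fredman--Willard sketching, and it has three problems here.
\begin{enumerate}
\item It needs a node-specific $w$-bit mask, to isolate those bits of $x$, and a node-specific $w$-bit multiplier.
\item It only places the $b$ bits inside a window of roughly $b^4$ positions, so it needs $b\lesssim w^{1/4}$, not $b\le w^{1/2}$ as your packing argument suggests.
\item Its multiplier is computed statically from the whole set of positions. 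Keeping it valid under $\O(1)$-time insertion or deletion of one position is itself a nontrivial part of any dynamic fusion node.
\end{enumerate}
The don't-care representation makes the \emph{stored} sketches easy to update, but it does nothing for sketching the \emph{query}.

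\textbf{This breaks the space bound.} Neither full word appears in your count of $\O(b^2+b\log w)$ bits per node, and the omission matters. If every node stores a mask and a multiplier, the $\Theta(n/b)$ nodes cost $\Theta(nw/b)=\Theta(nw^{1-\eps})$ bits. That exceeds $\O(nw^{\eps})$ for every $\eps<1/2$. With the cap $b\le w^{1/4}$ that this gathering method imposes, the cost is $\Omega(nw^{3/4})$, so even $\eps=1/2$ is out of reach. If instead each node keeps only its $\O(b\log w)$-bit list of positions $p_i$, you would have to synthesize the mask $\sum_i 2^{p_i}$ and a collision-free multiplier from that list in $\O(1)$ time with standard operations. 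You give no way to do this, and it is essentially a parallel variable shift, which a single multiplication does not provide.

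\textbf{What is missing} is a method to compute the sketch of $x$, or directly its matching key, at a node using $o(w)$ bits of node state. That means $\O(w^{\eps})$ bits per key, counting everything the extraction needs, while still supporting $\O(1)$ worst-case changes to the set of branching positions.

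A smaller omission: when an update changes a leaf's minimum, or a lazy split or merge moves a boundary, the minima stored in ancestor nodes change too. So several nodes' key sets are modified, not just one leaf's. With constant depth this is only $\O(1)$ extra work.
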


\paragraph*{Partial-sums.}
One main data-structure we will use many times is
the searchable partial-sums~\cite{fredman1982complexity,dietz1989optimal,patrascu2004lower,patrascu2006logarithmic,arroyuelo2016succinct,bille2018dynamic,domingues2026compressing}.
The standard presentation of this problem is to encode a sequence $A$ supporting $\SUM(A,i)=\sum_{j=0}^i A[j]$ and $\SEARCH(A,z)=\max\set{i}{z \geq \SUM(A,i)}$,
and updates of the form $A[i]\gets A[i]\pm 1$.
Note that $\RANK/\SELECT$ is a special case of this, for a bit-vector.
We prefer the following description:
encode a (non-decreasing) sorted array $Z[0:n-1]$ supporting access $Z[i]$,
updates of the form $Z[i:n-1]\gets Z[i:n-1]\pm 1$, and
$$\RANKM(Z,z)=\max\set{0\leq i\leq n-1}{z \geq Z[i]}\text{ ($-1$ if the set is empty)}~.$$
One can think of this as $\RANK/\SELECT$ on a multiset (a set with repetitions).
We also support inserting and deleting entries of $Z$.
For small values, there is a simple construction in the standard Word-RAM model.
\begin{lemma}[{\citep[{\S3}]{domingues2026compressing}}]\label{lemma:partial-sum-small}
For an array $Z$ of $n=w^{\O(1)}$ elements with $\lg\sigma=\O(\log w)$ bits,
there is a partial-sum data-structure using $\O(n\lg\sigma+n\log n)=\O(n\log w)$ bits and taking $\O(1)$ time for all operations.
\end{lemma}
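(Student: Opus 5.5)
The plan is to prove the lemma by the classical sum-of-partials (packing) approach: simple bit tricks on $w^{\O(1)}$ elements, recursion to a constant depth, and lookup tables for the bottom level. Since $\lg\sigma=\O(\log w)$, each entry takes $\O(\log w)$ bits, and $n=w^{\O(1)}$ means $\log n=\O(\log w)$. So a block of $B=\Theta(w/\log w)$ entries fits in $\O(1)$ words. It is more convenient to store differences $A[i]=Z[i]-Z[i-1]$, padded with offsets. Within a block, the prefix values relative to the block start are at most $B\sigma=w^{\O(1)}$ and hence also use $\O(\log w)$ bits each.

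\textbf{Bottom level.} I would take blocks of $b=\Theta(w^{\delta})$ entries for a small constant $\delta$. Each block stores its local prefix sums packed in one word, with a spare high bit per field. $\RANKM$ inside a block is done by the standard word-parallel comparison: multiply $z$ by a constant with a $1$ in each field, subtract, mask the test bits, and popcount via multiplication. The lookup table has only $\polylog w$ words, since an $\MSB$/popcount on $b$ bits costs $\O(1)$ with small tables or multiplication. A range update $Z[i:]\pm1$ inside a block is a single addition of a precomputed mask shifted by $i$. Inserting or deleting an entry is done by shift-and-or on the packed word.

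\textbf{Upper levels.} Above the bottom level I would build a tree of constant height $\O(1/\delta)$ with fan-out $b$. Each node stores the packed prefix sums of its children's totals. Node values are up to $n\sigma=w^{\O(1)}$, so each field is still $\O(\log w)$ bits, and $b$ fields fit in $\O(1)$ words. Queries descend the tree with one packed comparison per level. A range update from $i$ onward adds $\pm1$ to the suffix of fields in one node per level via a mask, for $\O(1/\delta)=\O(1)$ work. Access to $Z[i]$ sums one field per level.

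\textbf{Insertions and deletions.} These change node sizes. I would handle them with the lazy split/merge de-amortization from the preliminaries, keeping node degrees in $[b/2,2b]$, and store child pointers of $\O(\log n)$ bits each. This explains the $n\log n$ term in the space bound. Total space is $\O(n\lg\sigma+n\log n)$ bits, since internal nodes number $\O(n/b)$.

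The main obstacle is making insertion and deletion truly worst-case $\O(1)$. Split/merge must be spread over operations while queries still see a consistent tree. I would handle this by keeping at most one pending split per node and moving $\O(1)$ elements per operation. I would also allocate nodes in a fixed-size pool indexed by $\O(\log n)$-bit pointers, which avoids memory-management issues in $\mathcal{M}_B$.
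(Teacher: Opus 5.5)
The paper gives no proof of this lemma; it imports it from \citep[{\S3}]{domingues2026compressing}, whose balanced-tree construction (also invoked in the proof of \Cref{lemma:bit-vector}) appears to be the same constant-height, word-packed tree you describe, and your argument is correct along those lines. Two small points: each internal node must also keep packed prefix counts of its children's subtree sizes, since you need them to locate index $i$ (for access, suffix updates, insertions and deletions) and to turn the \RANKM descent into an index; and the worst-case de-amortization you call the main obstacle is unnecessary, because with height $\O(1/\delta)$ and nodes of $\O(1)$ words even an eager cascade of splits and merges costs $\O(1)$, so only the global resizing of the node pool as $n$ changes needs to be done lazily.
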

\noindent
For larger alphabets, we have to use a $\RANK$ index. We cite the following construction.
One can think of this as ``partial-sum search index'', analogous to the $\RANK$ index.
\begin{lemma}[{\citep[{\S3}]{domingues2026compressing}}]\label{lemma:partial-sum}
Given $\O(1)$ time access to the array $Z$ of $n=w^{\O(1)}$ elements,
given a constant-time $\O(nr)$-bit $\RANK$ index,
there is a partial-sum data-structure with $\O(n\log w+nr)$ additional bits and $\O(1)$ time for all operations.
\end{lemma}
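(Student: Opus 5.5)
The plan is to reduce $\RANKM$ on the multiset $Z$ to $\RANK$ on a set of distinct keys. The black-box $\RANK$ index then handles the large values, and \Cref{lemma:partial-sum-small} handles all bookkeeping that involves only $\O(\log w)$-bit quantities: multiplicities, positions and offsets. I will describe updates in amortized form and de-amortize them lazily as in the preliminaries.

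\textbf{Step 1 (distinct keys plus multiplicities).} Let $D=\{d_0<d_1<\cdots<d_{m-1}\}$ be the set of distinct values of $Z$, and let $c_j$ be the multiplicity of $d_j$. Since $c_j\leq n=w^{\O(1)}$, the prefix counts $C_j=\sum_{t\leq j}c_t$ form a sorted array of $\O(\log w)$-bit entries. \Cref{lemma:partial-sum-small} stores it in $\O(n\log w)$ bits with constant-time $\SUM$, $\SEARCH$, insertion and deletion. The $\SELECT$ access the $\RANK$ index needs is $d_j=Z[C_{j-1}]$, available in $\O(1)$ time through the given access to $Z$ and one $\SUM$ query. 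A query $\RANKM(Z,z)$ is answered as follows:
\begin{itemize}
\item compute $j=\RANK(D,z)$ with the $\RANK$ index;
\item return $C_{j-1}-1$ (this is $-1$ when $j=0$).
\end{itemize}
Insertions and deletions of entries of $Z$ either change one $c_j$ or insert/delete one key of $D$. In each case this is $\O(1)$ updates to the two structures.

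\textbf{Step 2 (suffix updates).} An update $Z[i:n-1]\gets Z[i:n-1]+1$ affects $D$ in two ways.
\begin{itemize}
\item If $Z[i-1]=Z[i]$, the value class containing $i$ splits into two keys. This costs one insertion into $D$ and one update of $C$.
\item Every key after position $i$ shifts by $+1$. Relative order is preserved, and so are all pairwise differences except the one at the boundary. The $-1$ case is symmetric and may merge two classes.
\end{itemize}
The concern is that a $\RANK$ index may store summaries of the keys, such as fusion-style distinguishing bits, that are not invariant under adding $1$ to a suffix. I would handle this with a tree of constant depth and fan-out $w^{\Theta(1)}$ over $D$. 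At each node, shifts are recorded as lazy additive offsets on children. These offsets are bounded by $n$, hence $\O(\log w)$ bits, and are kept in a small partial-sum structure of \Cref{lemma:partial-sum-small}. Each node's $\RANK$ index is built over keys normalized by subtracting the accumulated offset. A suffix shift then touches only $\O(1)$ offset entries per level: one prefix sum on the path from the leaf containing $i$ to the root. The normalized keys stay unchanged, except that the at most one key straddling the boundary is deleted and reinserted. Queries subtract the offset on the way down.

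\textbf{Main obstacle.} Step 2 is where I expect the real difficulty. Normalized keys of different children within a node are offset by different amounts, and the offsets can drift arbitrarily far apart. So I must ensure that the order among a node's keys, and the distinct-key invariant, survive. My first attempt is to make a node's $\RANK$ index operate on the \emph{true} values accessed through the offset-corrected $\SELECT$ of $\O(1)$ cost. Any structure that depends on bit patterns of the keys is then refreshed lazily, $\O(1)$ keys per update. This should be valid because a single $\pm1$ update changes the relative order of no pair of keys, and changes the distinguishing bits of only $\O(1)$ adjacent pairs. Summing the space gives $\O(nr)$ for the indices over all levels plus $\O(n\log w)$ for counts and offsets. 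All operations are $\O(1)$, since the depth is constant.
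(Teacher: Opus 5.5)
Your Step~1 is sound: the distinct keys $D$ with packed prefix counts $C_j$, $\SELECT$ access $d_j=Z[C_{j-1}]$, and $\RANKM(Z,z)=C_{j-1}-1$ all work. The paper does not prove this lemma; it imports it from \S3 of the cited work. So everything rests on your Step~2, and Step~2 does not go through.

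First, the lazy offsets are not bounded by $n$, because nothing ever resets them. Applying $Z[i:n-1]\gets Z[i:n-1]+1$ at a fixed $i$ some $t$ times pushes the offset of the affected children to $t$, up to the value range, i.e.\ $\Theta(w)$ bits. With one offset per child at the bottom level, this costs $\Theta(nw)$ bits in the worst case. Resetting an offset would mean rewriting the normalized keys of a whole subtree in the $\RANK$ index.

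Second, as you note yourself, normalized keys of different children need not be sorted or distinct. A single query threshold is then meaningless for the node's index.

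Your fallback rests on a false claim. A $+1$ shift preserves differences but not distinguishing bits, because of carries. For example, $0111,1000\mapsto 1000,1001$ moves $\MSB(d_j\oplus d_{j+1})$ from bit $3$ to bit $0$. On the keys $2^t-1,2^t$ ($t=1,2,\dots$), every adjacent pair of the shifted suffix changes its top differing bit.

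More fundamentally, the lemma only gives you the $\RANK$ index as a black box with single-key insert/delete and $\SELECT$ access. A suffix shift amounts to $\Theta(n)$ such updates. If you defer them, the index reads current values through $\SELECT$ that disagree with its internal state, so intermediate queries have no correctness guarantee.

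The missing idea is a way to feed each $\RANK$ index only values that change $\O(1)$ at a time, while still answering queries exactly. One standard route:
\begin{itemize}
\item Let the index work on a periodically (lazily) rebuilt snapshot of node-local keys.
\item Record the deviation of the true keys from the snapshot in small signed counters packed into $\O(1)$ words, so a suffix $\pm1$ is a single masked addition.
\item Keep the rebuild frequent enough that the counters stay within $\O(\log w)$ bits.
\item Answer a query by locating $z$ among the snapshot values with the index. Then resolve, by word-level arithmetic on small (capped) differences between consecutive snapshot keys plus the counters, the keys whose snapshot lies within the counter bound of $z$.
\end{itemize}
Your proposal has neither the bounded-deviation mechanism nor this window-resolution step. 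Without them, the constant-depth tree with lazy offsets does not yield a correct $\O(1)$-time structure within $\O(n\log w+nr)$ bits.
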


\paragraph*{Cardinality reduction.}
For a $\RANK/\SELECT$ dictionary, and in fact, a fully indexable dictionary, we cite the cardinality reduction of~\cite{domingues2026compressing}, that reduces the problem of encoding an arbitrary set to encoding only $\poly(w)$ elements.
Combining this with~\Cref{thm:FID}, then~\Cref{thm:general} follows.

\begin{theorem}[Cardinality reduction~\cite{domingues2026compressing}]
Given a dynamic FID for $n=\O(w^2)$ elements using $\O(nr)$ redundancy and taking $\O(t)$ time for operations (worst-case),
there is a dynamic FID for an arbitrary $n$ number of elements using $\O(nr+n)$ redundancy and taking $\O(t+\log_w n)$ time for operations (worst-case).
\end{theorem}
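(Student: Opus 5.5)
The plan is to split the sorted set $S$ into \emph{blocks} of $\Theta(w^2)$ consecutive elements, store each block in the given small FID, and put a $w^{\Theta(1)}$-ary search tree over the blocks. Write $S=\{s_0<\dots<s_{n-1}\}$ and cut it into $m=\Theta(n/w^2)$ blocks. Block $j$ has size $n_j=\Theta(w^2)$ and a \emph{pivot} $p_j$, its smallest element. Block $j$ keeps its remaining $n_j-1$ elements, shifted by $p_j$, in the small FID over the local universe $[p_j+1,p_{j+1})$. Call the size of this universe $u_j$.

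\textbf{Top structure.} On top we build a B-tree of fanout $w^{\Theta(1)}$ over the $m$ blocks. Each node stores, for its children:
\begin{itemize}
\item the pivots;
\item prefix counts of elements;
\item prefix counts of non-elements, namely $p_{j}-\sum_{i<j}n_i$, which is monotone in $j$.
\end{itemize}
Each node holds these in a constant-time partial-sum/search structure (Lemma~\ref{lemma:partial-sum-small}, or Lemma~\ref{lemma:partial-sum} for the pivot keys). The tree has depth $\O(\log_w n)$.

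\textbf{Operations.} $\RANK(S,x)$ descends the tree by pivots, then calls the block's FID. $\SELECT(S,i)$ descends by element counts. $\SELECT(\overline S,i)$ descends by the complement counts. $\RANK(\overline S,x)=x+1-\RANK(S,x)$. An update changes $\O(\log_w n)$ counters along one root-to-leaf path and performs one update inside a block. The total time is $\O(t+\log_w n)$.

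\textbf{Space.} The top structure costs $\O(w)$ bits per block, so $\O(n/w)$ bits in total. The block FIDs cost $\sum_j \lg\binom{u_j}{n_j-1}+\O(n_jr)$. It remains to compare $\log\binom{u}{n}$ with $\sum_j\log\binom{u_j}{n_j-1}$ plus the $m$ pivots written at $\log u\le w$ bits each.

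An encoding of $S$ is determined by the pivot set together with the local sets. Conversely, every $n$-subset of $[u]$ arises from exactly one such choice once the block sizes are fixed. Hence
\[
\sum_j \log\binom{u_j}{n_j-1}\le \log\binom{u}{n}-\log(\#\text{pivot configurations compatible with }S).
\]
I would bound the slack directly. Replacing the $m$ explicit pivots by their information content loses at most $\O(\log u)=\O(w)$ bits per block, since $\binom{u}{n}\big/\prod_j\binom{u_j}{n_j-1}$ is at least a product of factors of size roughly $u_j/n_j$. Each block also loses $\O(1)$ bits to rounding in $\lg$. Altogether this is $\O(mw)=\O(n/w)$ bits, so the total redundancy is $\O(nr+n)$.

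\textbf{Main obstacle: worst-case updates within the redundancy budget in $\mathcal{M}_B$.} When a block grows to twice, or shrinks to half, its target size, it must be split or merged. The pivot moves, so the local universes change, and the small FIDs must be rebuilt. I would do this lazily, as in the lazy-updates paragraph: each update to a block moves $\O(1)$ elements between it and a neighbour, using a temporary second FID for the new block. Only $\O(1)$ such rebuilds are active per block at any time, costing $\O(w^2r)$ extra bits per block, hence $\O(nr)$ overall.

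\textbf{Memory layout.} For $\mathcal{M}_B$ I would place the blocks in fixed-size slots, rounding each block's space to its maximum size. The small-FID space depends on $n_j$ only up to constant factors of $w^2$ and on $u_j$. To avoid wasting space on varying $u_j$, I would use the standard concatenation trick (size $s_1+\O(s_2)$) together with a free list, so that holes are refilled in $\O(1)$ time by moving the last block. Making this compaction consistent with the lazy rebuilds, while keeping the overhead at $\O(n)$ bits, is the step I expect to require the most care.
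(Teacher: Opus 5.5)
Your top-level design matches the reduction the paper defers to in the cited work: a $\RANK$ index over $\Theta(n/w^2)$ representatives, partial sums of element and non-element counts, and the given small FID inside each block. However, the claim that an update ``performs one update inside a block'' fails.

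\textbf{Pivots tied to elements.} You make the pivot $p_j$ the smallest element of block $j$ and store the other elements shifted by $p_j$. Deleting $p_j$ makes the next element the pivot, so every stored value in block $j$ changes. In addition, the local universe $[p_{j-1}+1,p_j)$ of block $j-1$ no longer covers the freed values below the new pivot. Inserting an element below $p_0$ has the same effect on block $0$. A black-box FID over a fixed universe supports neither a shift nor a change of universe, so this costs $\Theta(w^2)$ small-FID operations. Your lazy rebuilds do not absorb it: an adversary that keeps deleting the minimum of the same block starts a new rebuild at every step, contradicting ``$\O(1)$ active rebuilds per block''.

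The fix is to decouple the boundaries from $S$. Keep cut points $0=b_0<b_1<\dots<b_m=u$ that need not belong to $S$; a deleted pivot simply stays a boundary. Store the $n_j$ elements of $[b_j,b_{j+1})$ in a small FID over a universe of size $u_j=b_{j+1}-b_j$. Vandermonde's identity then gives $\prod_j\binom{u_j}{n_j}\le\binom{u}{n}$ for every fixed choice of boundaries, and universes change only at splits and merges. This bound is also all your space argument needs. The displayed inequality with ``pivot configurations compatible with $S$'' is not meaningful as written, since for fixed block sizes each $S$ has exactly one. The slack estimate is unnecessary, because the explicit boundaries cost only $\O(mw)=\O(n/w)$ bits.

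\textbf{Memory layout and de-amortization.} You leave this part open, and the concrete suggestions there do not work.
\begin{itemize}
\item Rounding each slot to the maximum size over $n_j\in[w^2/2,2w^2]$ wastes about $\lg\binom{u_j}{2w^2}-\lg\binom{u_j}{w^2/2}=\Theta(w^2\log(u_j/w^2))$ bits per block. That is $\Theta(n\log(u/n))$ overall.
\item Refilling a hole by moving the last block is not an $\O(1)$-time step. A block occupies up to $\Theta(w^2)$ words and blocks have different sizes, so the relocation must itself be spread over many updates, with each block's memory able to grow at its end.
\item During a lazy split, the old FID and the temporary FID have overlapping universes. Their combined $\lg\binom{\cdot}{\cdot}$ terms can therefore exceed the Vandermonde bound by $\Theta(n_j)$ bits, on top of the temporary FID's own $\O(n_j r)$ redundancy.
\end{itemize}
Since all blocks may be mid-rebuild simultaneously, the overlap is where the additive $\O(n)$ in the statement comes from. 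This is consistent with the paper's remark that de-amortizing the cardinality reduction costs an extra $\O(n)$ bits. Your accounting yields only $\O(nr+n/w)$ and does not identify this term.
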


\noindent
See~\cite{domingues2026compressing} for details on how to reduce $\RANK$, $\SELECT$, and partial-sums to a small set of size $\Theta(s)$, for a given parameter $s$, by using a $\RANK$ index and partial-sum structure for $\Theta(n/s)$ elements.

\section{Upper bound}

In this section,
we describe upper bounds for dynamic $\RANK/\SELECT$ dictionaries and extend it to fully indexable dictionaries.
In the next section, we will show that our results are tight.

\subsection{Achieving low redundancy}\label{sec:sequence}

We describe a construction for a $\SELECT$ dictionary for $n=w^{\O(1)}$ elements that matches the lower bound.
We implement the $\SELECT$ dictionary as a \emph{dynamic sequence}, that is, a data-structure that encodes a sequence $A$ while supporting insertion, deletion, and access.
For the $\SELECT$ dictionary, we use $A[i]=\SELECT(S,i)$.
The main issue here is to carefully manage the memory using the $\mathcal{M}_B$ model.

We recall the construction of~\cite{domingues2026compressing}, 
that achieved $\O(n\log n)$ redundancy.
The idea there is to store an unordered array of elements contiguously and maintain the permutation of elements separately, using $\O(n\log n)$ bits.
Let $A$ be the sequence that we want to represent.
Let $A^*$ be an array of $\lg u$ bits which stores the elements explicitly and let $P$ be the permutation of $A$ in $A^*$, that is, $A^*[P[i]]=A[i]$.
For $n=w^{\O(1)}$, the proof of~\Cref{lemma:previous} shows that we can encode $P$ in $\O(n\log n)$ with constant-time operations using an array of $(\log n)$-bit indices.
Every time $\log n$ increases/decreases by a sufficiently large constant,
we grow/shrink the indices;
this can be de-amortized using lazy updates.

\begin{lemma}[\cite{domingues2026compressing}]\label{lemma:previous}
There is a data-structure encoding a dynamic sequence
of $n=w^{\O(1)}$ elements in $[u]$
using an array of $n\lg u$ bits plus
$\O(n\log n)$ bits for indexing
and taking $\O(1)$ time for access, insert, and delete.
\end{lemma}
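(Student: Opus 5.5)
We maintain two components: an unordered array $A^*$ of $\lg u$-bit slots holding the elements, and an encoding of the permutation $P$ with $A^*[P[i]]=A[i]$. Access is then $A^*[P[i]]$, so it suffices to give a dynamic permutation structure with $\O(n\log n)$ bits and constant-time operations, together with a memory layout for $A^*$ in $\mathcal{M}_B$ that costs no more than $n\lg u+\O(n\log n)$ bits.

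\textbf{The array $A^*$.} We keep $A^*$ contiguous, occupying slots $0,\dots,n-1$.
\begin{itemize}
\item To insert an element $x$ at position $i$, write $x$ into slot $n$ of $A^*$. Then insert the value $n$ at position $i$ of the sequence $P$.
\item To delete position $i$, let $j=P[i]$. Move the last slot $A^*[n-1]$ into slot $j$, and remove position $i$ from $P$.
\item The delete step requires updating the entry of $P$ that pointed to $n-1$. To find it, we also maintain the inverse permutation $P^{-1}$ (slot to position). We then set $P[P^{-1}[n-1]]\gets j$ and update $P^{-1}$ accordingly.
\end{itemize}
Since $A^*$ always occupies a prefix of memory, we place it last in the layout, after the indexing structures. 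Alternatively, the indexing can be placed in front with a slowly-growing reserved block, following the concatenation remark for $s_1+\O(s_2)$ bits.

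\textbf{The permutation $P$.} $P$ must support insertion and deletion at a position, access, and point updates of values. The inverse $P^{-1}$ must also be maintained.
\begin{itemize}
\item Since $n=w^{\O(1)}$, use a B-tree of constant depth $\O(1/\eps)$ with fan-out $w^{\eps}$.
\item Each node stores the subtree sizes of its children packed in one word, using $\O(\log n)$ bits each with $\eps$ small enough that $w^\eps\log n\le w$.
\item Descending to position $i$ uses a constant-time packed $\RANK$ on prefix sums. These prefix sums can be computed and searched with multiplication and lookup tables, as in \Cref{lemma:partial-sum-small}.
\item Leaves hold $\Theta(w/\log n)$ indices of $\lceil\log n\rceil$ bits each, packed into $\O(1)$ words. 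Insertion or deletion inside a leaf is then a constant number of shifts and masks.
\item Splits and merges of nodes are handled by lazy updates, moving $\O(1)$ entries per operation.
\item $P^{-1}$ is a plain array of $\lceil\log n\rceil$-bit entries indexed by slot. It needs only point updates, since slots never shift: we always append at slot $n$ and fill holes from the end.
\item The leaf containing a given position of $P$ must be writable, both for the write $P[P^{-1}[n-1]]\gets j$ and when a position is referenced by $P^{-1}$. So $P^{-1}$ stores a pointer to the leaf and offset rather than the position. Leaf-local shifts then update the $\O(w/\log n)$ affected back-pointers by a packed word operation.
\end{itemize}
The total indexing space is $\O(n\log n)$ bits.

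\textbf{Growing $\log n$.} When $\log n$ changes by a large constant, the index width changes. We rebuild the whole structure with the new width in the background, $\O(1)$ entries per operation, so the worst case stays constant. During the rebuild both copies coexist, which is still $\O(n\log n)$ bits.

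\textbf{Main obstacle.} I expect the main obstacle to be keeping $P$ and $P^{-1}$ consistent under positional shifts in constant worst-case time. The packed leaf-pointer approach above is the first thing I would try. If it fails, the fallback is to drop $P^{-1}$: on deletion, mark slot $j$ free instead of refilling it. Free slots are kept in a stack, and the array is compacted lazily in the background with $\O(1)$ moves per operation. Each move then needs the position of the moved slot's owner, which is exactly the difficulty we were trying to avoid. So resolving this inverse-lookup issue is the crux of the proof.
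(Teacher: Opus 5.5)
Your high-level plan is the one the paper sketches, deferring details to the cited construction: a contiguous unordered array $A^*$, the permutation $P$ kept as $(\log n)$-bit indices in a constant-time structure, and lazy re-encoding whenever $\log n$ changes by a constant. There is a genuine gap, however. The step you yourself call the crux is left open, and the mechanism you propose for it fails.

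Suppose $P^{-1}[v]$ stores the pair (leaf, offset) of the entry of $P$ that holds slot $v$. Inserting or deleting at offset $t$ of a leaf then changes the offset of every later entry in that leaf, so up to $\Theta(w/\log n)$ back-pointers must be rewritten. Those back-pointers sit at $P^{-1}[v]$ for the slot values $v$ stored in the leaf. These values are arbitrary, so the entries are scattered across $P^{-1}$. No packed word operation reaches them all at once, and the update costs $\Theta(w/\log n)$ time rather than $\O(1)$. The same problem recurs whenever you relocate a leaf in memory to keep the node pool compact in $\mathcal{M}_B$. Your fallback (a free-slot stack with background compaction) needs the same inverse lookup, as you note. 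So, as written, constant-time deletion is not established.

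The missing idea is to store less in $P^{-1}$ and recover the rest by word-level parallelism:
\begin{itemize}
\item Let $P^{-1}[v]$ hold only a stable identifier of the leaf containing the entry $v$. Map identifiers to leaf addresses with a table of $\O(n\log^2 n/w)$ bits, so relocating a leaf touches a single table entry.
\item On deletion, to fill the hole at slot $j$, go to leaf $P^{-1}[n-1]$ and locate the unique field equal to $n-1$ in $\O(1)$ time. Broadcast $n-1$ to all fields with one multiplication and XOR it with the packed leaf. Detect the all-zero field with the standard subtract-and-mask test (one spare bit per field, masking out unused fields). Extract its index with an lsb/\MSB computation.
\item Overwrite that field with $j$ and set $P^{-1}[j]\gets P^{-1}[n-1]$.
\end{itemize}
Now a shift inside a leaf changes no back-pointer at all. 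Only entries that migrate between leaves during the lazy split/merge need one write to $P^{-1}$ each, and that process moves $\O(1)$ entries per operation. With this change, the rest of your plan yields the lemma: the constant-depth B-tree with packed prefix sums, $P^{-1}$ growing and shrinking at its end alongside $A^*$, and the background rebuild when $\log n$ changes.
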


Our construction will improve this result using the more involved layout in the array.
The main issue is how to avoid having an arbitrary permutation $P$ of the elements, as that requires $\Omega(n\log n)$ bits.
Instead, we organize the elements in $A^*$ in a more structured way
to ``restrict'' the permutation.

\begin{theorem}\label{thm:sequence}
For every $1\leq k\leq \log^* n$,
there is a data-structure encoding a dynamic sequence
of $n=w^{\O(1)}$ elements in $[u]$
using an array of $n\lg u$ bits plus
$\O(n\log^{(k)} n)$ bits for indexing
and taking $\O(k)$ time for access, insert, and delete.
\end{theorem}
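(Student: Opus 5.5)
We prove the statement by induction on $k$, building a recursive, depth-$k$ version of \Cref{lemma:previous}. The base case $k=1$ is exactly \Cref{lemma:previous}, since $\log^{(1)} n=\log n$. For the inductive step, the idea is to stop storing an arbitrary permutation $P$ of the array $A^*$. Instead we force the permutation to respect a block structure, so that each element only needs an index into a small block.

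\textbf{Block decomposition.} Fix $b=\Theta(\log^{2} n)$; any polylogarithmic choice works, and we choose $b$ so that $\log b=\Theta(\log\log n)=\Theta(\log^{(2)}n)$. Split $A$ into consecutive segments of $\Theta(b)$ elements, maintained with the lazy merge/split procedure of the preliminaries. There are $m=\O(n/b)$ segments. Each segment is stored as a contiguous chunk of $A^*$, that is, a slot region of $\Theta(b)\lg u$ bits. Inside its chunk, a segment is a dynamic sequence of $\Theta(b)$ elements, which we encode recursively using the level-$(k-1)$ structure for $b$ elements. That structure costs $\O(b\log^{(k-1)}b)=\O(b\log^{(k)}n)$ indexing bits per segment and $\O(k-1)$ time. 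The top level must locate the segment containing position $i$ and its chunk. For this we keep a partial-sum structure over the segment sizes, using \Cref{lemma:partial-sum-small}, which costs $\O(m\log w)=\O((n/b)\log w)$ bits. Since $n=w^{\O(1)}$ we have $\log w=\Theta(\log n)$, so this is $\O(n/\log n)$ bits. We also keep a table mapping segments to chunks, costing $\O(m\log n)$ bits. Access then takes $\O(1)$ time at the top level plus the recursive cost, giving $\O(k)$ in total. The indexing bits sum to $\O(n\log^{(k)}n)$.

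\textbf{Memory management.} This is the main obstacle. Chunks have variable sizes, and in $\mathcal{M}_B$ only the highest used cell counts, so we cannot waste a constant fraction of slack per chunk. The waste must be at most $\O(\log^{(k)} n)$ bits per element, and a slack of even one element per segment would already cost $\lg u$ bits per $b$ elements, which is too much.

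To fix this, we draw all chunk capacities from a few fixed size classes and keep the chunks packed. Within a class, every chunk has the same capacity. A chunk holding fewer elements than its capacity leaves unused slots, but we charge those slots to the segment itself, since each element is stored with an $\O(1)$-word amortized displacement. The alternative I would actually try first is more uniform. Store every element in one global array $A^*$ in slots grouped by segment, with all $m$ segments interleaved as fixed-capacity blocks of exactly $b$ slots. A segment of size $s\in[b/2,2b]$ then occupies $\ceil{s/b}\le 2$ blocks, the last one partially full. To eliminate the partial blocks, we keep at most one partially filled ``overflow'' block globally per size class. Insertions into a full segment borrow a slot by moving one element, found through the recursive structure's index, into a fresh block at the end of $A^*$. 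Deletions fill the hole with the last element of $A^*$ and update that element's owner through a back-pointer of $\O(\log n)$ bits per block, which costs only $\O(n/b\cdot\log n)$ bits in total.

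Each update thus moves $\O(1)$ elements, and every move requires an $\O(k)$-time recursive index update. This keeps $A^*$ at exactly $n\lg u$ bits plus $\O(b\lg u)$ bits of slack. Since $b\lg u=\O(\log^2 n\cdot w)$, this slack would have to be absorbed into the indexing term. Because $n=w^{\O(1)}$, that is only valid for $n\ge w\log^2 n$. For smaller $n$ we instead take $b$ smaller, still $\omega(1)$, or fall back on \Cref{lemma:previous} with words, and I would verify this corner separately.

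\textbf{Time.} Finally, all merges and splits are de-amortized lazily as in the preliminaries. Growing or shrinking the index widths when $\log n$ changes is handled in the same way as in \Cref{lemma:previous}. This yields $\O(k)$ worst-case time for access, insert, and delete.
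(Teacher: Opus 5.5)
There is a genuine gap, and it is at the step you yourself call the main obstacle: the layout of $A^*$ in $\mathcal{M}_B$.

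Your recursion unit (segments of $\Theta(b)$ elements) and your physical block size ($b$ slots) are at the same scale. So every segment with $s\in[b/2,2b]$ has a partially filled last block, and in the worst case a constant fraction of all elements sit in partial blocks. Leaving the empty slots in place costs $\Theta(\lg u)$ bits per element. ``Charging'' them to the segment does not change that, because in $\mathcal{M}_B$ an empty slot below the highest used cell is still paid for.

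Your alternative does not repair this, for three reasons.
\begin{itemize}
\item Filling a deletion hole with the globally last element of $A^*$ puts an element of some other segment into this segment's block. Blocks then lose single ownership, and an $\O(\log n)$-bit back-pointer per block can no longer tell you who owns a slot or where that element sits in its owner's recursive structure.
\item Keeping ``one partially filled overflow block globally'' while segment sizes vary freely in $[b/2,2b]$ means the $\Theta(n)$ tail elements live in a shared packed region. Each then needs its own $\O(\log n)$-bit pointer, which is $\Theta(n\log n)$ indexing bits again. Avoiding tails altogether would require keeping segment sizes at multiples of $b$, which forces cascading shifts across consecutive segments and $\Theta(n/b)$ time per update.
\item Once a segment's elements leave its chunk, the level-$(k-1)$ structure no longer runs on a contiguous array. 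The inductive hypothesis, as you state it, does not cover that case.
\end{itemize}
The unresolved small-$n$ slack term is a symptom of the same problem. A minor point: $n=w^{\O(1)}$ gives only $\log n=\O(\log w)$, so bound the top-level partial sums by $\O((n/b)\log n)$ bits rather than via $\log w$.

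The missing idea, which is what the paper uses, is to separate the two scales by a super-constant factor.
\begin{itemize}
\item Recurse on superblocks of $\Theta(\log^4 n)$ consecutive elements. Store each superblock as $\Theta(\log^2 n)$ \emph{unordered} full blocks of exactly $\log^2 n$ slots, plus only $\O(1)$ buffer blocks.
\item Pack the full blocks of all superblocks in one region, with a block table of $\O(n/\log n)$ bits.
\item The buffer elements are now only an $\O(1/\log^2 n)$ fraction of all elements. Pack them element-wise in a region at the end and locate them through \Cref{lemma:previous} at $\O(\log n)$ bits each, which is $\O(n/\log n)$ bits in total.
\item An insertion appends to its superblock's buffer. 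A deletion fills the hole with an element from the \emph{same} superblock's last buffer block.
\item A buffer block that fills up or empties migrates lazily between the two regions. Superblock merges or splits touch only the index.
\end{itemize}
Each superblock therefore sees a virtual, non-contiguous external array with $\O(1)$ reads and writes that grows and shrinks only at its end. The paper explicitly states the theorem for such arrays, so the level-$(k-1)$ structure can be applied to each superblock at $\O(\log^{(k-1)}(\log^4 n))=\O(\log^{(k)}n)$ bits per element. Your bit count for the recursion is right; the layout is what has to change, along these lines.
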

\begin{proof}
As in~\Cref{lemma:previous}, we will use an out-of-order array $A^*$ to keep the elements and have additional data-structures to index into $A^*$ and handle updates (analogous to the role of $P$ in~\Cref{lemma:previous}).
In particular, the  external array $A^*$ grows/shrinks at the end
with $\O(1)$ time reads and writes.
The theorem holds even if the array $A^*$ is not stored contiguously in memory,
as we will use later.

We divide the sequence $A$ into 
``superblocks'' of $\Theta(\log^4 n)$ successive elements.
Using a partial-sum data-structure of~\Cref{lemma:partial-sum-small},
we can find the superblock to which an index belongs in $\O(1)$ time, at the cost of $\O((n/\log^4n)\cdot \log n)=\O(n/\log^3 n)$ additional bits.
In addition, each superblock is divided into $\Theta(\log^2 n)$ unsorted blocks of $\log^2 n$ elements (except for the very last block in a superblock, which can have less than $\log^2 n$ elements).
Successive elements may be stored in different blocks.
We call the last $\Theta(1)$ blocks of a superblock the ``buffers''.
In our memory organization, we start with the non-buffer blocks,
then store the buffer blocks in a region at the end
(called the ``buffer region''), storing the elements in an unordered way.
Indices into the buffer region
can shift as the region moves/grows/shrinks during updates,
thus we employ
a partial-sum data-structure of~\Cref{lemma:partial-sum-small}
using $\O((n/\log n)\cdot \log n)=\O(n)$ bits.
See~\Cref{fig:memory}.

\begin{figure}[ht]
\centering
\begin{tikzpicture}[scale=0.38]
\foreach \x in {0,1,...,8}
    \draw (\x-10,0) rectangle ++(1, 1);
\foreach \x in {0,1,...,8}
    \draw (\x+10,0) rectangle ++(1, 1);
\foreach \x in {0,1,...,8}
    \draw (\x, 0) rectangle ++(1, 1);
    
\foreach \x in {0,1,2}
    \draw (\x-4+3,-3) rectangle ++(1, 1);
\draw[ultra thick] (0-4+3,-3) rectangle ++(3, 1);
\foreach \x in {0,1,2}
    \draw (\x+3,-3) rectangle ++(1, 1);
\draw[ultra thick] (0+3,-3) rectangle ++(3, 1);
\foreach \x in {0,1,2}
    \draw (\x+4+3,-3) rectangle ++(1, 1);
   
\draw (-10,-6) rectangle ++(22, 1);
\draw (12,-6) rectangle ++(7, 1);

\foreach \x in {0,1,2}
    \draw (\x+5,-6) rectangle ++(1, 1);
\draw[ultra thick] (0+5,-6) rectangle ++(3, 1);
\draw[ultra thick,->] (0-4+3+3/2,-3) -- (0+5+3/2,-5+0.1);
\foreach \x in {0,1,2}
    \draw (\x-4,-6) rectangle ++(1, 1);
\draw[ultra thick] (0-4,-6) rectangle ++(3, 1);
\draw[ultra thick,->] (0+3+3/2,-3) -- (0-4+3/2,-5+0.1);

\draw[ultra thick] (13,-6) rectangle ++(1, 1);
\draw[ultra thick,->] (1+4+3+1/2,-3) -- (13+1/2,-5+0.1);
\draw[ultra thick] (16,-6) rectangle ++(1, 1);
\draw[ultra thick,->] (2+4+3+1/2,-3) -- (16+1/2,-5+0.1);
\draw[ultra thick] (18,-6) rectangle ++(1, 1);
\draw[ultra thick,->] (0+4+3+1/2,-3) -- (18+1/2,-5+0.1);

\foreach \x/\y in {0/2, 1/5, 2/8, 3/10, 4/1, 5/6, 6/0, 7/9, 8/4}
    \draw[ultra thick,->] (\x+1/2,0) -- (\y-1/2,-2+0.1);

\draw[ultra thick, <->] (0,1+1/2) -- ++(9, 0);
\draw (0+9/2,1+1/2) node[anchor=south] {Superblock};
\draw (-2,-3+1/2) node[anchor=east] {Blocks:};
\draw[dashed] (6+1/2,-3-1/2) rectangle ++(12, 2);
\draw (11,-3+1/2) node[anchor=west] {Buffer blocks};
\draw[ultra thick, <->] (-10,-6-1/2) -- ++(22, 0);
\draw (-10+22/2,-6-1/2) node[anchor=north] {Non-Buffer region};
\draw[ultra thick, <->] (12,-6-1/2) -- ++(7, 0);
\draw (12+7/2,-6-1/2) node[anchor=north] {Buffer region};

\draw (-10-1/2,0+1/2) node[anchor=east] {$A=$};
\draw (-10-1/2,-6+1/2) node[anchor=east] {$A^*=$};

\end{tikzpicture}
\caption{Memory organization of~\Cref{thm:sequence}.}
\label{fig:memory}
\end{figure}

It takes $\O((n/\log^2 n)\cdot\log n)=\O(n/\log n)$ bits to store the permutation of the non-buffer blocks in $A^*$.
The buffer region at the end does not store its elements contiguously,
and instead uses~\Cref{lemma:previous},
taking $\O((n/\log^4 n)\cdot \log^2 n\cdot \log n)=\O(n/\log n)$ bits.
Within each superblock, we just need to store the permutation of the elements across the blocks.
Directly encoding the permutation as before uses $\O(\log(\log^4 n))=\O(\log\log n)$ bits per element.
We will improve this soon using recursion on $k$, but note we already achieved $\O(n\log\log n)$ redundancy.

For an insertion or deletion, we will handle it in the buffers.
To insert an element, we add it to the last (buffer) block of its superblock.
If we delete an element in a non-buffer block,
we move an element from the last (buffer) block.
Every time we create/destroy a buffer block after many insertions/deletions,
we move $\log^2 n$ elements to/from the buffer region as a block.
For an insertion, we do this by moving it to the end of non-buffer region
and shifting the indices of the buffer region; deletion is similar.
To de-amortize this to get worst-case $\O(1)$ time, we perform lazy updates.
Finally,
if the superblocks get too big/small (resp. double or half the size),
we merge/split appropriately.
This part only updates the indexing data-structure, not $A^*$.

Now, we will improve the space to $\O(n\log^{(k)}n)$ additional bits and $\O(k)$ time.
We will use recursion on $k$; recall that we know the construction for $k=1$.
Assume that there is a construction with $\O(n\log^{(k-1)}n)$ additional bits and $\O(k-1)$ time.
If we look at the task of organizing the elements within a superblock,
this is again a dynamic sequence.
Looking at an individual superblock,
we employ the recursive organization for parameter $k-1$.
The respective external array is laid out in memory as the non-buffer blocks and the buffer elements;
it grows/shrinks in the last buffer block and reads/writes take $\O(1)$ time.
The indexing structure takes $\O(n\log^{(k-1)}(\log^4 n))=\O(n\log^{(k)} n)$ bits to manage each superblock.
The additional structures of the blocks and superblocks take $\O(n\sum_{j=1}^k 1/\log^{(j)} n)=\O(n)$ space, adding each level.
Therefore, the indexing structure uses $\O(n\log^{(k)}n)$ bits and takes $\O(k)$ time to insert/delete/access.
All elements in the sequence are stored in $A^*$ using $n\lg u$ bits and the result follows.
\end{proof}

To apply this to a $\SELECT$ dictionary,
the construction of~\cite{domingues2026compressing} stored the elements directly in a dynamic sequence, as the gap from $n\lg u$ to $\lg\binom{u}{n}$ was already $\O(n\log n)$.
To remove this and achieve a tighter redundancy, we use the Elias-Fano encoding~\cite{elias1974efficient,fano1971number,carter1978exact,pibiri2017dynamic,pibiri2020succinct}.
This encodes the dictionary as an array $V$ of $n$ $\lg(u/n)$-bit elements and
a bit-vector $H$ of $h\leq 3n$ bits (called the header).
Namely, for $\delta=\lg u - \lg n$, let:
\begin{align*}
V[i]&=\SELECT(S, i)\wedge 1^\delta\in\{0,1\}^\delta~\text{and}\\ 
H&=1^{m_0}01^{m_1}0\cdots 1^{m_{n-1}}0\in \brc{0,1}^{h},\text{ where }
m_i=\size{\set{x \in S}{(x\gg \delta)=i}}~.
\end{align*}
When $n$ doubles/halves, we lazily update $\delta$, paying $\O(n)$ bits.
To get the $i$-th element,
we need $\RANK/\SELECT$ on the bit-vector.
That is, $\RANK(H,i)=\sum_{j\leq i}H[j]$
and $\SELECT(H,i)$ finds the $i$-th one in $H$.
We compute:
\begin{equation}
\SELECT(S,i)=V[i]\oplus \brk{\RANK(\neg H, \SELECT(H, i)) \ll \delta}~.
\end{equation}
Note $\RANK(\neg H, x)=x+1-\RANK(H,x)$.
Thus, we only need to implement $\RANK/\SELECT$ on the bit-vector
and updates that add or remove bits from $H$.
We prove the following, inspired by~\cite{clark1996efficient} to reduce $\RANK/\SELECT$ to small lookup tables.

\begin{lemma}\label{lemma:bit-vector}
There is a data-structure encoding a bit-vector $H\in\brc{0,1}^n$,
using $\O(n)$ bits,
with  $\O(\log_{w}n)$ time $\RANK/\SELECT$ and insertion/deletion of bits.
\end{lemma}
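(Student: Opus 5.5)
The plan is a B-tree over the bit-vector, with branching factor $w^{\Theta(1)}$, leaves of $\Theta(w)$ bits, and lookup tables used inside each node. Throughout, $n$ denotes the number of bits of $H$ and $w$ the word size; when applied to the Elias--Fano header we have $n=w^{\O(1)}$, so $\O(\log_w n)$ is a constant.

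\textbf{Leaves.} Split $H$ into chunks of between $w/2$ and $w$ bits. Each chunk is stored in one machine word, together with its length in $\O(\log w)$ bits. Within a single word, $\RANK$ is a popcount of a masked prefix, and $\SELECT$ finds the $i$-th one. Inserting or deleting a bit at position $j$ is done with masks and shifts. For popcount and in-word $\SELECT$, I would use multiplication tricks or lookup tables on $(\log w)/2$-bit pieces, which fit within the allowed $\polylog w$-word tables. A piece-by-piece scan over $\O(w/\log w)$ pieces is too slow, so I would do it word-parallel instead: compute per-piece counts with the standard multiplication-based popcount, take prefix sums of those counts by multiplying by $0x0101\ldots$, and locate the right piece by parallel comparison. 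All of these are standard constant-time operations.

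\textbf{Internal nodes.} Leaves are overflow/underflow balanced: a chunk splits when it exceeds $w$ bits and merges or borrows when it falls below $w/2$, each in $\O(1)$ word operations. The chunks are organized into a B-tree with fan-out $b=w^{\eps}$. Every internal node stores, for each child, the number of bits and the number of ones in its subtree. These are kept as two partial-sum arrays using \Cref{lemma:partial-sum-small}. That lemma applies because the fan-out is $w^{\O(1)}$ and the counts have $\O(\log n)=\O(w)$ bits. Strictly, the lemma requires $\lg\sigma=\O(\log w)$. When $n=w^{\O(1)}$ this holds; for general $n$ I would instead pack the counts in a node of fan-out $w^{\eps}$ into $\O(1)$ words, or nest a fusion-style search.

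With these arrays in place, the operations are:
\begin{itemize}
\item $\RANK(H,x)$ descends the tree by $\RANKM$ on the length sums, accumulating the one-counts along the way.
\item $\SELECT(H,i)$ descends by $\RANKM$ on the one-count sums.
\item Updates change $\O(1)$ entries per level by a range $\pm1$, which \Cref{lemma:partial-sum-small} supports.
\end{itemize}
Each level costs $\O(1)$ and the height is $\O(\log_w n)$, giving the claimed time.

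\textbf{Space.} There are $\O(n/w)$ leaves, each using $w+\O(\log w)$ bits. Internal nodes total $\O(n/w)$, each using $\O(b\log n)$ bits spread over $n/(w\,b^{\text{depth}})$ subtrees, which amortizes to $\O(n\log n/w)=\O(n)$ bits. The total is therefore $\O(n)$ bits. Memory in $\mathcal{M}_B$ is handled by keeping the leaf words in an array grown and shrunk at its end. Each tree node holds pointers or indices into this array; when a leaf is freed, the last leaf is moved into its slot, as in \Cref{lemma:previous}. Rebalancing splits and merges are de-amortized with the lazy-update technique.

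\textbf{Main obstacle.} The hardest step is keeping everything worst-case $\O(1)$ per level inside the standard Word-RAM, chiefly the in-word $\SELECT$ and keeping the partial-sum entries small enough for \Cref{lemma:partial-sum-small}. I would handle the counts by storing, in each node, counts relative to its own subtree. These are bounded by $w^{\O(1)}$ in the regime of interest, and in the general case a further small level of indirection covers the rest.
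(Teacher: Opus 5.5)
\paragraph{Gap: constant-time rank/select inside a leaf word.} Your tree layer matches the paper's first step: segments of $\Theta(w)$ bits, navigated by a balanced tree with $\O(n)$ bits of overhead. The gap is at the leaves. You assume $\RANK/\SELECT$ inside a $w$-bit word in $\O(1)$ time, but the word-parallel method you describe does not give this in the paper's model. That model offers only addition, shifts, bitwise operations, multiplication, and $\polylog w$-word tables; popcount and in-word select are not primitives.

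The textbook multiplication-based popcount first aggregates the bits into counter fields of width $f=\Omega(\log w)$. Only then can a single multiplication by $\sum_j 2^{jf}$ (the $0x0101\ldots$ trick) sum them without overflow. Building those fields takes $\Theta(\log\log w)$ rounds of field doubling, and this includes your per-piece counts for $(\log w)/2$-bit pieces. It is constant only for a fixed $w$ such as $64$.

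Your prefix-sum step has the same width problem: the prefix sums reach $w$, but your fields have only $(\log w)/2$ bits. Tables on $(\log w)/2$-bit pieces cannot be applied in parallel, and scanning is too slow, as you note yourself. So the method you cite costs $\Theta(\log\log w)$ per leaf operation, and you give no $\O(1)$ alternative. You therefore do not get $\O(1)$ time for $n=w^{\O(1)}$, which is exactly how the lemma is used for the Elias--Fano header.

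\paragraph{Missing idea: store the per-block counts.} Store the per-block information instead of recomputing it. The paper splits each $\Theta(w)$-bit segment into variable-size blocks of $\Theta(b)$ bits, with $b=\frac12\lg w$, kept in range by splitting and merging. It maintains the block sizes and one-counts with the partial-sum structure of \Cref{lemma:partial-sum-small}. This costs $\O(w)+\O(\lg w\cdot w/b)=\O(w)$ bits per segment and finds the right block in $\O(1)$ time. A single table lookup $f:\{0,1\}^b\times[b]\to[b]$ on that one block then finishes $\RANK/\SELECT$. Because blocks have variable length, inserting or deleting a bit changes only one block's size and count. That is a range $\pm1$ update, plus an occasional entry insertion or deletion on a split or merge, and the partial-sum structure supports all of these in $\O(1)$ time.

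\paragraph{Smaller issue: general $n$.} Your fallback of packing $w^{\eps}$ counts of $\log n$ bits into $\O(1)$ words fails when $\log n\gg w^{1-\eps}$. The paper simply invokes the balanced tree of \cite{domingues2026compressing} at that point, and you could do the same.
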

\begin{proof}
We divide $H$ into segments of $\Theta(w)$ bits.
Using a balanced tree as described in~\citep[{\S3}]{domingues2026compressing},
we reduce $\RANK/\SELECT$ to a segment in $\O(\log_w n)$ time, using $\O((n/w)\cdot w)=\O(n)$ bits.
Let $b=\frac{1}{2}\lg w$.
We further split the segment into blocks of $\Theta(b)$ bits;
$\RANK/\SELECT$ can be computed in $\O(1)$ time using lookup-tables. 
We use lookup-tables $f:\brc{0,1}^{b}\times [b]\to[b]$
for $\RANK/\SELECT$ over a $b$-bit bit-vector;
these lookup-tables take $\O(\sqrt{w}\log^2 w)$ bits.
We merge/split the blocks appropriately to maintain the size invariant.
Using the partial-sum data-structure of~\Cref{lemma:partial-sum-small}
(for block sizes and counting zeros/ones),
a bit-vector with $\O(w)$ bits needs only $\O(w)+\O(\lg w\cdot w/b)=\O(w)$ bits to navigate to the right block and compute $\RANK/\SELECT$ in $\O(1)$ time.
\end{proof}

Therefore, we use~\Cref{thm:sequence} and~\Cref{lemma:bit-vector} with Elias-Fano encoding to get the following.

\begin{theorem}\label{coro:select}
For every $1\leq k\leq \log^* n$,
there is a $\SELECT$ dictionary for $n=w^{\O(1)}$ elements
using $\O(n\log^{(k)}n)$ redundancy
and taking $\O(k)$ time for $\SELECT$ and update.
\end{theorem}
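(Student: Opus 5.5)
We combine the Elias-Fano encoding with \Cref{thm:sequence} for the low parts $V$ and \Cref{lemma:bit-vector} for the header $H$. Throughout, $\delta=\lg u-\lg n$ is maintained lazily, and we store $V$ as a dynamic sequence of $n$ elements of $\delta$ bits each, ordered by rank, so that $V[i]$ is the low part of $\SELECT(S,i)$. Applying \Cref{thm:sequence} with universe $2^\delta$ and parameter $k$, the sequence uses $n\delta$ bits for the array $A^*$ plus $\O(n\log^{(k)}n)$ bits for indexing. Access, insert, and delete all take $\O(k)$ time. The header $H$ has at most $3n$ bits. Since $n=w^{\O(1)}$, \Cref{lemma:bit-vector} stores $H$ in $\O(n)$ bits with $\O(\log_w n)=\O(1)$ time $\RANK/\SELECT$ and bit insertion/deletion.

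\textbf{Space.} The total is $n\delta+\O(n)+\O(n\log^{(k)}n)$ bits. We compare this with $\lg\binom{u}{n}\geq n\log_2(u/n)$. Because $\delta=\lg u-\lg n\leq\log_2(u/n)+1$, the term $n\delta$ exceeds the information-theoretic bound by only $\O(n)$. Lazily updating $\delta$ when $n$ doubles or halves costs another $\O(n)$ bits. We also need the space of the lookup tables and of the concatenation of the pieces. By the $\mathcal{M}_B$ concatenation remark, we place $A^*$, the unique $\Theta(n\log u)$-size object, last, since it grows or shrinks at its end. The $\O(n\log^{(k)} n)$-bit auxiliary structures go before it with a constant-factor slack. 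This keeps the redundancy at $\O(n\log^{(k)}n)$.

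\textbf{Select.} Given $i$, we compute $p=\SELECT(H,i)$ and $\RANK(\neg H,p)=p+1-\RANK(H,p)$. We then apply the formula above, combining this high part with $V[i]$, which is read in $\O(k)$ time from the sequence.

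\textbf{Updates.} To insert $x$, write $x=(q\ll\delta)\oplus r$. First, compute the position in $H$ where the bucket $q$ ends: this is $\SELECT(\neg H,q)$, which we obtain through $\RANK/\SELECT$ on $H$ by a symmetric lookup or a small binary trick. Next, insert a $1$ there. The rank $i$ of $x$ among elements is then the number of ones before the insertion point. Within bucket $q$, the low parts are sorted, so we must find where $r$ falls among them. Here $\SELECT$ alone is not enough.

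This bucket search is the main obstacle: an update needs a $\RANK$-like search inside a bucket. My plan is to assume that updates are specified by position. The dictionary of \Cref{thm:dictionary} obtains the rank from the $\RANK$ index, which accesses this $\SELECT$ dictionary $\O(1)$ times. In that setting, insert($x$, $i$) just inserts $r$ at position $i$ of $V$ and the $1$ bit into $H$.

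If a self-contained update is required, I would fall back to bucket sizes. A bucket has expected $\O(1)$ elements but can be large in the worst case, so the fallback is the compressed $\RANK$ index of \Cref{thm:domingues}, which is too costly at $\O(nw^\eps)$ bits. I would therefore state the theorem with the rank supplied externally, matching the $\SELECT$ dictionary/$\RANK$ index decomposition.

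Deletion is symmetric: remove $V[i]$ from the sequence and remove the $1$ at position $\SELECT(H,i)$ of $H$. The total time is $\O(k)+\O(1)=\O(k)$.
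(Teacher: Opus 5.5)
Your proposal is correct and takes the same approach as the paper: the paper's entire proof is to combine the Elias--Fano encoding with \Cref{thm:sequence} for the $\delta$-bit low parts $V$ and \Cref{lemma:bit-vector} for the header $H$, as you do. Your point that an insertion cannot be placed inside its bucket using $\SELECT$ alone, and your resolution of taking the position as supplied externally, match how the paper implicitly uses this structure in its $\SELECT$ dictionary/$\RANK$ index decomposition, where the $\RANK$ index computes the position (cf.\ the proof of \Cref{thm:dictionary}).
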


\subsection{Small universe}

Now, we investigate the case where $\log u=\O(\sqrt{w})$,
which will be useful in the implementation of $\RANK$ for the general case.
We can achieve a much more succinct dictionary in this regime,
which shows that we cannot easily remove
the assumption that $\log u=\Theta(w)$ in
the lower bound of~\Cref{thm:lower-bound}.

\begin{theorem}\label{thm:small-universe}
For $\log u=\O(\sqrt{w})$,
there is a $\RANK/\SELECT$ dictionary for $n=w^{\O(1)}$ elements
using $\O(n)$ redundancy
and taking $\O(1)$ time for $\RANK/\SELECT$ and update.
\end{theorem}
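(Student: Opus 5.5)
The plan is to exploit the fact that with $\log u=\O(\sqrt{w})$ a single word holds $\Theta(\sqrt{w})$ elements. This lets us work at block granularity, where the permutation overhead becomes negligible. We use the Elias-Fano encoding as in \Cref{coro:select}: the header $H$ is maintained by \Cref{lemma:bit-vector}. Since $n=w^{\O(1)}$, $\log_w n=\O(1)$, so $\RANK/\SELECT$ on $H$ take $\O(1)$ time with $\O(n)$ bits. The low parts $V[i]\in\{0,1\}^\delta$ form a dynamic sequence of $n$ elements of $\delta\le\log u=\O(\sqrt w)$ bits. The redundancy of Elias-Fano itself is $\O(n)$. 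So the task reduces to storing this sequence with $\O(n)$ extra bits and $\O(1)$ access/insert/delete, plus answering $\RANK$.

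For the sequence, we split $V$ into blocks of $b=\Theta(\sqrt{w})$ consecutive elements, with sizes kept in $[b/2,2b]$ by lazy merging/splitting. Each block, stored \emph{in order}, occupies $\O(b\delta)=\O(w)$ bits, i.e.\ $\O(1)$ words. Inserting or deleting inside a block is then an $\O(1)$-word shift-and-mask operation, and access is a shift and mask. Blocks live in an unordered array $A^*$ of fixed-width slots (width $2b\delta$, or better, packed slack handled by the buffer trick of \Cref{thm:sequence}). We then apply \Cref{lemma:previous} to the sequence of $n/b$ blocks: its $\O((n/b)\log n)=\O(n\log w/\sqrt w)=o(n)$ indexing bits are negligible. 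A partial-sum structure (\Cref{lemma:partial-sum-small}) over block sizes locates the block of index $i$ in $\O(1)$ time using $\O((n/b)\log w)$ bits.

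The main obstacle is avoiding slack inside blocks. Reserving $2b\delta$ bits per block wastes $\Theta(n\delta)$ bits, not $\O(n)$. I would handle this as in \Cref{thm:sequence}. Within groups of $\Theta(\log n)$ blocks, all blocks but the last one or two are kept \emph{exactly} full (exactly $b$ elements, $b\delta$ bits). Updates are absorbed by borrowing from the group's buffer block, which stores its elements as individual slots in a buffer region indexed via \Cref{lemma:previous}. That region costs $\O(\log n)$ bits per buffered element, but only $\O(b)$ elements per group are buffered. Hence the overhead is $\O((n/(b\log n))\cdot b\log n)=\O(n)$. Keeping blocks ordered internally means an update in a full block is a shift by one element plus moving one boundary element to the neighbour/buffer, still $\O(1)$ words. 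This keeps the layout tight up to $\O(n)$ bits, with all memory growth at the end of $A^*$ as required in $\mathcal{M}_B$.

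Finally, $\RANK(S,x)$ is computed via the header. Let $q=x\gg\delta$. Then $\SELECT(\neg H,q)$, computed through \Cref{lemma:bit-vector}, gives the range $[i_0,i_1)$ of indices whose high part equals $q$. We then need the count of $V[i]\le x\wedge 1^\delta$ within that range. Because elements are sorted, this is a predecessor search in a sorted sub-range of $V$. At most the range spans $\O(1)$ blocks whose words are each loaded in $\O(1)$ time. Inside a word packing $\Theta(\sqrt w)$ fields of $\O(\sqrt w)$ bits, a parallel comparison against the broadcast key (one multiplication to replicate $x$, a subtraction with sentinel bits, and a popcount/$\MSB$ via multiplication and lookup) yields the rank in $\O(1)$. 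If the range spans many blocks, we first binary-locate the block by a partial-sum/\RANK structure over block first elements; since there are only $\O(\sqrt w)$ blocks per word-sized fusion summary, a two-level fusion-style search keeps this $\O(1)$ with $o(n)$ bits. Updates modify $H$, $V$ and these summaries in $\O(1)$.
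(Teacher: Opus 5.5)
Your outline matches the paper's: an Elias--Fano header via \Cref{lemma:bit-vector}, the low parts as a packed dynamic sequence, a block permutation via \Cref{lemma:previous}, and rank by parallel comparison inside a word. The step meant to remove slack, however, fails. You keep all but the last one or two blocks of a group of $\Theta(\log n)$ blocks \emph{exactly full and internally in sequence order}, and you claim an update costs one shift plus ``moving one boundary element to the neighbour/buffer''.

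\begin{itemize}
\item Pushing the displaced element into the neighbour cascades. Block $j+1$ is also exactly full, so it must push its own last element onward. An insertion into the first block of a group therefore touches $\Theta(\log n)$ blocks, which is not $\O(1)$ for $n=w^{\Theta(1)}$.
\item Putting the element into the group's buffer instead violates the layout, because it belongs between blocks $j$ and $j+1$, not at the end of the group. The buffer trick of \Cref{thm:sequence} does not transfer. There the blocks are \emph{unordered}, and a hole can be refilled with an arbitrary buffer element only because the within-superblock permutation is stored explicitly at $\O(\log^{(k)} n)$ bits per element, which is exactly the cost you want to avoid.
\item If you instead let the buffer hold runs sitting between specific blocks, nothing in your description ever moves those elements back into full blocks. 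A deletion from block $j$ with an empty run after it cascades again. So the bound of $\O(b)$ buffered elements per group, on which your $\O(n)$ accounting rests, is unsupported: repeated insertions into one block make the buffer grow without limit.
\end{itemize}

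The missing idea is a rebalancing mechanism for in-order packed blocks, and this is the core of the paper's proof. With $s=w^{1/4}$, the paper uses sorted blocks of $s^2$ elements separated by sorted ``separator'' runs of $\Theta(s)$ elements. All separators of a superblock of $\Theta(s)$ blocks are packed, sorted, into $\Theta(1)$ fixed-size blocks. An update then touches only its own block, shifts one element into or out of the adjacent separator, and moves one element between the last separator and the single variable-size trailing block. Only the trailing block pays $\O(\log w)$ bits per element, for $\O(n/s)$ elements overall. A continuously running, de-amortized clean-up sweeps each superblock in $s$ operations, moving $s$ elements at a time between a separator and the next block with $\O(1)$ word shifts. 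This keeps every separator between $3s$ and $9s$ elements. The $s^2$ versus $\Theta(s)$ size ratio and the $\Theta(s)$ sweep length are what make $\O(1)$ worst-case updates compatible with $\O(n)$ slack.

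Your rank component is essentially fine: a constant-depth B-tree over full $\lg u$-bit block-boundary keys, packed $\Theta(\sqrt{w})$ per word, could replace the paper's use of \Cref{thm:domingues}. Once elements live between blocks, though, it must also detect when the predecessor lies in a separator, as the paper's does.
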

\begin{proof}
We consider the same framework as~\Cref{thm:sequence} that encodes a sequence $A$ as an out-of-order array $A^*$ plus auxiliary data-structures to encode the permutation.
Let $s=w^{1/4}$.
We split the sequence into 
superblocks of $\Theta(s^3)$ elements,
and each superblock
into blocks of $s^2$ consecutive elements,
with $\Theta(s)$ elements separating the blocks (except the last block).
We call the consecutive elements that are between two consecutive blocks the ``separators''.
Note there are $\Theta(s)$ blocks per superblock
and $\Theta(s^2)$ separator elements per superblock.
Because $\log u=\O(\sqrt{w})=\O(w/s^2)$, a block fits within a word,
we can, in constant time, compute $\RANK$ and insert/delete elements from a block.
Therefore, we keep the elements within each block sorted.

The memory organization is similar to before.
The separators are packed into $\Theta(1)$ blocks;
we also keep the separator elements sorted.
The regular blocks and the separator blocks are fixed-size ($s^2$ elements) except the last block (which we can think of as the buffer blocks of~\Cref{thm:sequence}).
We store the fixed-size blocks of all superblocks first.
Then, we store the elements in the last block at the end using an arbitrary permutation (like the buffer region).
The permutation for elements in the buffer region takes $\O((n/s) \cdot \log w)=\O(n)$ additional bits, and similarly for the permutation of the blocks.
See~\Cref{fig:block}.

\begin{figure}[ht]
\centering
\begin{tikzpicture}[scale=0.4]
\foreach \x in {0,1,...,21}
    \draw (\x+3, 0) rectangle ++(1, 1);
\foreach \x/\y in {0/2,3/6,10/1,12/6,19/3,23/4}
    \foreach \i in {1,...,\y}
        \draw (\x+\i,-3) rectangle ++(1, 1);
\draw[ultra thick] (4,-3) rectangle ++(6, 1);
\draw[ultra thick] (13,-3) rectangle ++(6, 1);
\draw[ultra thick] (24,-3) rectangle ++(4, 1);
\foreach \i in {1,...,2}
    \draw[ultra thick, ->] (\i+2+1/2, 0) -- ++(-2,-2+0.1);
\foreach \i in {1,...,6}
    \draw[ultra thick, ->] (\i+4+1/2, 0) -- ++(-1,-2+0.1);
\foreach \i in {1,...,6}
    \draw[ultra thick, ->] (\i+11+1/2, 0) -- ++(1,-2+0.1);
\foreach \i in {1,...,4}
    \draw[ultra thick, ->] (\i+20+1/2, 0) -- ++(3,-2+0.1);
\draw[ultra thick, ->] (11+1/2, 0) -- ++(0,-2+0.1);
\foreach \i in {1,...,3}
    \draw[ultra thick, ->] (\i+17+1/2, 0) -- ++(2,-2+0.1);
    
\foreach \x/\y in {1/6,2/6,3/6,4/4} {
    \draw[ultra thick] (7*\x+1-\y,-6) rectangle ++(\y, 1);
    \foreach \i in {1,...,\y}
        \draw (7*\x+1-\y+\i-1,-6) rectangle ++(1, 1);
}
\draw[ultra thick,->] (7,-3) -- ++(-2,-2+0.1);
\draw[ultra thick,->] (16,-3) -- ++(-4,-2+0.1);
\draw[ultra thick,->] (2,-3) -- ++(15,-2+0.1);
\draw[ultra thick,->] (12-1/2,-3) -- ++(7,-2+0.1);
\draw[ultra thick,->] (22-1/2,-3) -- ++(-1,-2+0.1);
\draw[ultra thick,->] (26,-3) -- ++(1,-2+0.1);

\draw[ultra thick,<->] (3,1+1/2) -- ++(22, 0) node[midway, anchor=south]{Superblock};
\draw[ultra thick,<->] (2,-6-1/2) -- ++(6, 0) node[midway, anchor=north]{Block};
\draw[ultra thick,<->] (9,-6-1/2) -- ++(6, 0) node[midway, anchor=north]{Block};
\draw[ultra thick,<->] (16,-6-1/2) -- ++(6, 0) node[midway, anchor=north]{Separator block};
\draw[ultra thick,<->] (25,-6-1/2) -- ++(4, 0) node[midway, anchor=north]{Trailing block};
\end{tikzpicture}
\caption{Depiction of blocks in~\Cref{thm:small-universe}.}
\label{fig:block}
\end{figure}

For $\SELECT$, we keep a partial-sum data-structure to find to which block/separator an index belongs, and this also
takes $\O((n/s) \cdot \log w)=\O(n)$ additional bits and $\O(1)$ time.
In fact, this data-structure can encode any dynamic sequence.
For $\RANK$,
we employ~\Cref{thm:domingues} with $\eps=1/4$
for the set of first and last elements of each block.
We compute $\RANK$ in this set in constant time using $\O((n/s^2)\cdot w^{1/4})=\O(n)$ additional bits.
With this, we determine one of two cases: either there is a predecessor element that is in a separator, or we need to compute $\RANK$ in the block.
Because the block is sorted and fits into a word, we can compute $\RANK$ in $\O(1)$ time, and similarly for the separator block.
For the trailing block, we just employ the same $\RANK$ index, using $\O((n/s)\cdot w^{1/4})=\O(n)$ additional bits.

When inserting into a block, we move the largest element to the separator, and shift the elements forward.
Similarly, when deleting an element in the block, take the consecutive element from the separators and shift the elements back.
At every insertion/deletion into the separator blocks, we move one element from the last separator to/from the trailing block.
The trailing block is allowed to grow and, when it reaches $2s^2+\Theta(s)$ elements, we create a new block.
The update for superblocks is analogous to~\Cref{thm:sequence}.

Now, we need a clean-up setup to maintain the size of the separators, while keeping the invariants of the data-structure.
Namely, we want that the number of elements in a separator is between $4s$ and $8s$.
We describe updates initially in an amortized fashion; recall that we have $\O(1)$ time access to all the separators.
When a separator between two blocks gets too large, 
we take the largest $s$ elements from the separators, and move them to the next block, displacing the largest $s$ elements from it, which go to the next separator,
all in constant time.
Because both the block and the separator are stored as packed arrays that fit in $\O(1)$ words,
this is just a sequence of bit-shifts and thus takes $\O(1)$ time per block.
We also need $\O(1)$ time to update the $\RANK$ index for first and last element of the blocks.
Similarly, when the size of the separator is too small,
we borrow $s$ elements from the next block, and so on.
In both cases, it takes $\Theta(s)$ time to go through all blocks.

By repeating this procedure in the superblock $\O(1)$ times,
we can make the number of elements in the separator always between $6s$ and $7s$ after the clean-up.
Importantly, the clean-up runs continuously, not triggered when some separator is too large/small.
When de-amortizing this,
we choose large enough constants so that
we complete a clean-up of the entire superblock in $s$ operations.
Then, all separators
(except for the one being currently resized)
have between $3s$ and $9s$ elements
while we are lazily performing this clean-up procedure,
and between $4s$ and $8s$ elements at the end of the clean-up.
Therefore, we keep the memory organization described previously with worst-case $\O(1)$ time for updates.

To conclude,
as in~\Cref{coro:select}, we use the Elias-Fano encoding, employing this construction for a dynamic sequence of $\delta=\lg u-\lg n$ bits.
The important observation is that the main data-structure (ignoring the $\RANK$ indices) can encode any dynamic sequence,
which is what we need for the Elias-Fano suffix.
With this, $\RANK/\SELECT$ in $S$ is reduced to $\RANK$ and indexing into this sequence.
Using the header in the Elias-Fano encoding, we can still solve $\RANK$ in a block in constant time, as we can restrict the block to only those that share the $(\lg n)$-bit prefix.
The result follows.
\end{proof}

We add the following mechanism to this construction,
which will be useful in extending this to the general universe case.
We divide a superblock (except for the trailing block) into ``miniblocks'' of between $s$ and $2s$ elements.
Applying~\Cref{lemma:partial-sum-small}, we can find the appropriate miniblock in $\O(1)$ time, using $\O((n/s)\log w)=\O(n)$ bits.
A miniblock can be stored in at most 2 blocks and 2 separator blocks.

\begin{observation}\label{obs:access}
In the construction of~\Cref{thm:small-universe},
every element is either in the trailing block,
or in a miniblock of $\Theta(s)$ consecutive elements.
A miniblock can be read in $\O(1)$ time.
\end{observation}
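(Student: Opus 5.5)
The plan is to check the two claims of \Cref{obs:access} directly against the memory layout of \Cref{thm:small-universe}: first the partition into miniblocks, then the constant-time read.

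\emph{Partition.} Fix a superblock and remove its trailing block. What remains is, in sequence order, an alternation of regular blocks of exactly $s^2$ elements and separators holding between $3s$ and $9s$ elements each, even during the lazy clean-up. The one exception is the separator currently being resized, which I would handle by letting a miniblock absorb the extra $O(s)$ elements. This remaining sequence has length $\Omega(s)$, since it contains at least one block when it is nonempty. So it can be cut greedily into consecutive pieces of length between $s$ and $2s$, each miniblock covering a contiguous range of indices.

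\emph{Maintaining the partition.} The boundaries are kept in the partial-sum structure of \Cref{lemma:partial-sum-small}, which locates the miniblock of an index in $O(1)$ time. When an update changes the length of a miniblock, I would split it or merge it with a neighbour once it leaves the range $[s,2s]$. These are $O(1)$ partial-sum updates, de-amortized lazily as in the Preliminaries.

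\emph{Constant-time read.} Because a miniblock has at most $2s$ consecutive elements, it meets at most $O(1)$ consecutive pieces of the layout. Regular blocks have $s^2\ge 2s$ elements, and separators have $\Theta(s)$ elements, at least $3s$ except for the one being resized. A window of $2s$ consecutive elements therefore touches at most two blocks and two separators; in the worst case it contains part of a block, a whole separator, and part of the next block. The separators' elements live in at most two separator blocks, since the separators are packed contiguously and sorted into the separator blocks.

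Each regular block and each separator block holds $s^2$ elements of $O(\sqrt w)=O(w/s^2)$ bits, so it occupies $O(1)$ words. The permutation structure gives its position in $A^*$ in $O(1)$ time, and the partial sums give the offset of the relevant subrange within it. So the miniblock is obtained by $O(1)$ word reads plus shifts and masks that extract and concatenate the relevant subranges into $O(1)$ words.

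\emph{Main obstacle.} The delicate point is the moment during clean-up when $s$ elements are being shifted between a block and a separator. There, a window might straddle a partially moved region. To handle this, I would make the lazy move itself a constant-time word operation, namely a single batch of bit-shifts, as the construction already does. Then every read sees a consistent layout and the at-most-two-blocks/two-separator-blocks bound still holds.
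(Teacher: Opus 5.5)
Your proposal is correct and matches the paper's argument: split the non-trailing part of each superblock into miniblocks of $s$ to $2s$ consecutive elements, locate them with \Cref{lemma:partial-sum-small}, and note that such a window lies in at most two regular blocks and two separator blocks, each occupying $\O(1)$ words since $s^2\cdot\O(\sqrt{w})=\O(w)$. Your special handling of the separator being resized is unnecessary, because the window bound only uses that regular blocks have $s^2\geq 2s$ elements and that each separator is stored contiguously in the packed separator blocks.
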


\begin{remark}
The previous construction of~\Cref{thm:small-universe} works with $\log u=\O(w/\log^2 w)$ in the cell-probe model using the $\RANK$ index of~\cite{ajtai1984hash}.
The argument follows {mutatis mutandis} replacing $s=\log w$.
\end{remark}

\subsection{Supporting rank}

To support $\RANK$, we slightly modify the $\SELECT$ dictionary, namely we exploit redundant bits in the sequence of elements to store the $\RANK$ index.
This differs from the techniques of~\cite{domingues2026compressing},
where there is a clear separation from the $\SELECT$ dictionary and the $\RANK$ index.

We use the $\SELECT$ dictionary from~\Cref{coro:select}.
Recall that we store the first $\lg n$ bits in an Elias-Fano header,
and the remaining $\delta=\lg u-\lg n$ bits as a sequence.
Let $\ell$ be a parameter we will set later.
To help with $\RANK$, we store the first $\lg n+\ell$ bits separately in the data-structure of~\Cref{thm:small-universe},
and the remaining bits as the sequence in~\Cref{thm:sequence}.
We note the following central observation.

\begin{observation}\label{lemma:exploit}
For $1\leq \ell< \delta$ and
a set $S$ of elements of $\delta$ bits,
one of the following holds:
\begin{enumerate}[label=(\Roman*)]
\item $\RANK$ in $S$ can be determined by only the prefix of $\ell$ bits plus one $\SELECT$ access of $S$, or
\item there are two elements of $S$ that share the prefix of $\ell$ bits.
\end{enumerate}
An analogous statement holds for multisets and $\RANKM$.
\end{observation}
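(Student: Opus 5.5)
The plan is a direct dichotomy on whether the map $x\mapsto x\gg(\delta-\ell)$, sending an element to its $\ell$-bit prefix, is injective on $S$. If it is not injective, two elements of $S$ share the same $\ell$-bit prefix, which is case (II), and there is nothing more to show. So the work is to show that injectivity gives case (I).

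Assume all prefixes $p(y)=y\gg(\delta-\ell)$ for $y\in S$ are distinct. Write $S=\{y_1<\cdots<y_m\}$. Since prefixes are monotone in the value, the prefixes $p(y_1)<\cdots<p(y_m)$ are strictly increasing. For a query $x$, I compute $p(x)$ and let $j$ be the number of elements whose prefix is strictly smaller than $p(x)$. This depends only on the multiset of prefixes. All of $y_1,\dots,y_j$ are then smaller than $x$, and all $y_i$ with $i>j+1$ have prefix larger than $p(x)$, hence are larger than $x$.

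The only element whose comparison with $x$ is not settled by prefixes is therefore $y_{j+1}$, and only when $p(y_{j+1})=p(x)$. Injectivity guarantees this is at most one element. I would make a single access $\SELECT(S,j+1)$ and compare it with $x$. Then $\RANK(S,x)=j+[\SELECT(S,j+1)\le x]$, where the bracket is taken to be $0$ if $j=m$ or if $p(y_{j+1})\ne p(x)$. This uses only the prefixes plus one $\SELECT$ access, which is case (I).

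For multisets and $\RANKM$, the same argument applies with the dichotomy read as two entries of $Z$ sharing an $\ell$-bit prefix. If no two entries share one, then $\RANKM(Z,z)$ is determined by the index range of entries with prefix below $p(z)$ together with one access to the unique candidate entry with prefix equal to $p(z)$. Equal entries automatically share a prefix, so they fall into case (II).

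The only point requiring care is that the unresolved entry, when it exists, is exactly the one at index $j+1$. This follows from the monotonicity of prefixes in the sorted order. I do not expect any step to be a real obstacle, since the statement is essentially a pigeonhole-style case split.
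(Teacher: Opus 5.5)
Your proof is correct and matches the paper's argument. The paper also splits on whether the $\ell$-bit prefixes are distinct. In the distinct case it takes the rank $r$ of $y\gg(\delta-\ell)$ among the prefixes (counting $\le$ where you count $<$) and settles $\RANK(S,y)\in\{r-1,r\}$ with one comparison against $\SELECT(S,r)$, which is the same element as your $\SELECT(S,j+1)$ whenever the comparison matters. The paper only asserts the multiset/$\RANKM$ case, and your explicit treatment of it is fine.
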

\begin{proof}
If (II) holds, we are done.
Otherwise, all $\ell$-bit prefixes of $S$ are distinct, and thus for $r=\RANK(\set{x \gg (\delta-\ell)}{x\in S}, y \gg (\delta-\ell))$, we get $\RANK(S,y)\in\{r-1,r\}$, which can be determined by comparing $y$ to $\SELECT(S,r)$.
\end{proof}

We use this and~\Cref{obs:access} to prove~\Cref{thm:dictionary}.

\ThmDict*
\begin{proof}
Let $\ell=\Theta(\sqrt{w})$ and $s={w}^{1/4}$.
First, we use~\Cref{thm:small-universe} for a small universe of $\ell+\lg n$ bits to store the prefix (most-significant bits) of $\ell+\lg n$ bits from the elements of $S$.
Note that this also subsumes the Elias-Fano encoding.
Importantly, the main data-structure of~\Cref{thm:small-universe} supports repeated values (i.e. $\RANK/\SELECT$ on multisets).
Recall that the construction of~\Cref{thm:small-universe}
divides the prefixes into miniblocks of $\Theta(s)$ elements of $\ell$ bits.
The remaining bits of the elements are stored using~\Cref{thm:sequence}.
It takes additional $\O(n)$ bits and $\O(k)$ time to navigate to the right miniblock (note that the $\RANK$ indices from before now have to access the larger $\SELECT$ dictionary).

Now, we show how to compute $\RANK$ within each miniblock.
Recall that we can read an entire miniblock of~\Cref{thm:small-universe} in $\O(1)$ time (cf.~\Cref{obs:access}).
For each miniblock,
we use an indicator bit-vector to identify the elements with a repeated prefix; stored separately in the indexing structure of~\Cref{thm:small-universe}.
We mark with $1$ the prefixes that are seen again, reading first-to-last in the miniblock.
This adds $\O(n)$ bits overall.
We consider the two cases of~\Cref{lemma:exploit}, both of which can be computed in $\O(1)$ time.
\begin{enumerate}[label=(\Roman*)]
\item In this case, $\RANK$ can be decided by first looking at the first $\ell$ bits and, reading the entire array of $\Theta(s)$ elements, we compute $\RANK$ on the prefixes in $\O(1)$ time, as they fit into $\O(1)$ words  (there are no repeated values).
Then, performing one $\SELECT$ on the set is enough to compute the $\RANK$ in $S$ and takes $\O(k)$ time.
\item
For a sequence of elements with a repeated prefix,
we replace the bits in the second element by
the $\RANK$ index for this sequence using~\Cref{thm:domingues} for $\eps=1/4$,
which uses $\O(s\cdot w^{1/4})\leq\ell$ bits (we choose the constant in $\ell$ so that this inequality holds).
Recall that the miniblocks are sorted arrays.
To compute $\RANK$ in $S$,
first compute $\RANK$ in the prefix, then check if the
prefix is repeated.
If so, read the $\RANK$ index for elements sharing this prefix and compute the $\RANK$ there, accessing the $\SELECT$ dictionary.
We emphasize that we mask out the repeated prefixes when computing $\RANK$ in the miniblock.
Otherwise, we conclude the $\RANK$ as in (I).
\end{enumerate}
Given the encoding of Representation (II),
we can lazily split/merge miniblocks in constant time.

The only remaining issue is to concatenate the two out-of-order arrays of~\Cref{thm:sequence} and~\Cref{thm:small-universe}.
We solve this by a kind of interleaving.
That is, we concatenate
every contiguous sequence of
$\brk*{\sqrt{w}/\log^2 n}$ blocks of~\Cref{thm:sequence} ($\log^2 n$ elements of $\delta-\ell$ bits)
with a block of~\Cref{thm:small-universe} ($\sqrt{w}$ elements of $\ell=\O(\sqrt{w})$ bits).
Because the blocks of~\Cref{thm:small-universe} fit into a word, we can shift them in the last combined block in $\O(1)$ time.
In total, this construction adds only $\O(n)$ bits to the space and only $\O(1)$ more accesses, compared to~\Cref{coro:select}.
\end{proof}

\subsection{Extension to FIDs}

In~\cite{domingues2026compressing},
supporting $\RANK/\SELECT$ in the complement of $S$ is also considered.
A dictionary that supports $\RANK/\SELECT$ for both $S$ and $\overline{S}$ is called a fully indexable dictionary (FID).
Note that $\RANK(\overline{S}, x)= x+1-\RANK(S,x)$, thus we focus on $\SELECT(\overline{S}, i)$.
We extend~\Cref{thm:dictionary} to this case as well,
using the techniques from~\cite{grossi2009more} and~\citep[{\S4}]{domingues2026compressing}.
The idea is to use a partial-sum data-structure.

\begin{observation}[\cite{grossi2009more}]
Let $Z[i]=\SELECT(S,i+1)-i$, for $0\leq i\leq n-1$.
Then for every $i$, the following holds: $\SELECT(\overline{S},i+1)=i+1+\RANKM(Z, i)$.
\end{observation}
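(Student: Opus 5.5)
We need to prove that, with $Z[i]=\SELECT(S,i+1)-i$ for $0\leq i\leq n-1$, every $i$ satisfies $\SELECT(\overline{S},i+1)=i+1+\RANKM(Z,i)$. The plan is a direct counting argument. First we record the basic properties of $Z$. Write $x_0<x_1<\cdots<x_{n-1}$ for the elements of $S$, so that $Z[j]=x_j-j$. Since the $x_j$ are distinct integers, $x_{j+1}\geq x_j+1$, hence $Z[j+1]\geq Z[j]$; so $Z$ is non-decreasing and $\RANKM$ is well defined. Moreover, $Z[j]$ is exactly the number of elements of $\overline{S}$ smaller than $x_j$: among the $x_j$ integers in $[0,x_j)$, exactly $j$ lie in $S$. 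Note that $\RANKM(Z,i)=\max\set{j}{Z[j]\leq i}$, which is $-1$ if no such $j$ exists, so $\RANKM(Z,i)+1$ equals $\size{\set{j}{Z[j]\leq i}}$ by monotonicity.

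Next, let $y=\SELECT(\overline{S},i+1)$ be the $(i+1)$-th smallest element of $\overline{S}$. Exactly $i$ elements of $\overline{S}$ are smaller than $y$, and $y\notin S$. Hence $y$ is the number of elements of $S$ below $y$, plus the $i$ complement elements below $y$; that is,
\[
y = i + \size{\set{j}{x_j<y}}.
\]
It therefore remains to show $\size{\set{j}{x_j<y}}=\size{\set{j}{Z[j]\leq i}}$. We verify the two inclusions.

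If $x_j<y$, the complement elements below $x_j$ are also below $y$ and differ from $y$, so $Z[j]\leq i$. Conversely, suppose $Z[j]\leq i$ but $x_j>y$ (equality is impossible since $y\notin S$). Then $y$ and all $i$ complement elements below $y$ lie below $x_j$, giving $Z[j]\geq i+1$, a contradiction. So the two sets of indices coincide, and $y=i+\RANKM(Z,i)+1$, which is the claim.

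The only delicate point is the boundary convention. We must check that the $-1$ value of $\RANKM$ on an empty set matches the case where no element of $S$ precedes $y$, in which case $y=i$. We must also handle $i$ ranging up to $u-n-1$, where $\RANKM$ may return $n-1$; both cases are covered by the counting identity above. This is a matter of careful bookkeeping rather than a real obstacle.
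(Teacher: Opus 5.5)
Your proof is correct, but it runs in the opposite direction from the paper's. The paper starts from the candidate $z=i+1+\RANKM(Z,i)$ and rewrites the defining maximum as $\max\{z'\ge i+1 : z'-1\ge\SELECT(S,z'-i)\}$. From membership and maximality it extracts the sandwich $\SELECT(S,z-i)<z<\SELECT(S,z+1-i)$, which needs the convention $\SELECT(S,n+1)=u$ and a separate case for $\RANKM(Z,i)=-1$. It then counts zeros of the characteristic bit-vector in $B[0:z]$ to certify that $z$ is the $(i+1)$-th zero.

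You instead start from $y=\SELECT(\overline{S},i+1)$ and decompose $[0,y)$ to get $y=i+\lvert\{j : x_j<y\}\rvert$. You then prove the index-set identity $\{j : x_j<y\}=\{j : Z[j]\le i\}$, using monotonicity of $Z$ only to identify the size of the latter set with $\RANKM(Z,i)+1$. Both arguments rest on the same fact, that $Z[j]$ counts the elements of $\overline{S}$ below $x_j$.

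Your version absorbs both boundary cases ($\RANKM(Z,i)=-1$ and $\RANKM(Z,i)=n-1$) into the counting without any convention. It is also explicit that $i$ ranges over $0\le i\le u-n-1$; the paper's proof writes $0\le i\le n-1$ at that point, which looks like a slip carried over from the index range of $Z$. The paper's verify-the-candidate route, in turn, uses only the maximality in the definition of $\RANKM$ for its key step, not the monotonicity of $Z$.
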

\begin{proof}
The idea is that $Z[i]$ is the number of zeros before the $(i+1)$-th one.
Moreover, $Z$ is non-decreasing.
By definition, $\RANKM(Z, i)=\max\set{0\leq j\leq n-1}{i\geq \SELECT(S,j+1)-j}$.
For $0\leq i\leq n-1$, assuming $\RANKM(Z, i)\neq -1$, we write:
\begin{align*}
z=i+1+\RANKM(Z, i)&=\max\set{i+j+1}{i+j\geq \SELECT(S,j+1)}\\
&=\max\set{z'\geq i+1}{z'-1\geq \SELECT(S,z'-i)}~.
\end{align*}
Then $1+\SELECT(S,z-i)\leq z<\SELECT(S,z+1-i)$,
taking the convention that $\SELECT(S,n+1)=u$.
If $\RANKM(Z, i)=-1$, then $z=i<\SELECT(S,1)$.

Consider the characteristic bit-vector $B$ for the set $S$, that is, $B[i]=1$ iff $i\in S$.
The inequalities on $z$ imply that
the $(z-i)$-th one of $B$ is strictly below the index $z$ and
the $(z+1-i)$-th one is strictly above the index $z$.
Therefore, there are exactly $i+1$ zeros in $B[0:z]$ and $B[z]=0$,
thus $z=\SELECT(\overline{S},i+1)$.
\end{proof}

\ThmFID*
\begin{proof}
Recall~\Cref{lemma:partial-sum}, and observe that the accesses to the sum $Z$ are kept up-to-date by the $\SELECT$ dictionary.
If $\log u=\O(\sqrt{w})$,
then, it suffices to modify~\Cref{thm:small-universe} 
to compute $\RANKM$ on $Z$, where $Z[i]=\SELECT(S,i+1)-i$.
The crucial observation is that we read $\O(\sqrt{w})$ elements packed into a word
and compute $\RANKM$ in $Z$ in $\O(1)$ time.
Concretely, we compute $Z$ on the spot, using parallel subtraction on the packed array, after reducing to a common $\log n$ prefix using the Elias-Fano header (see~\cite{patrascu2014dynamic,domingues2026compressing}).
Then $\RANKM(Z,i)$ is constant-time on $Z$ as a packed array.
The $\RANK$ index for finding the block is replaced by
partial-sum data-structure of~\Cref{lemma:partial-sum} instantiated with~\Cref{thm:domingues} for $\eps=1/4$,
which uses $\O((n/s)\cdot w^{1/4})=\O(n)$ bits.
Overall, this achieves $\O(n)$ redundancy and $\O(1)$ time for all operations.

Otherwise, for $\lg u\geq 2\lg n+\ell=\O(\sqrt{w})$,
we proceed by
adapting the ideas of~\Cref{thm:dictionary},
(using Elias-Fano and the small universe) 
where we look at the prefix $\SELECT(S,i+1)\gg (\delta-\ell)$ where $\delta=\lg u-\lg n$ and $\ell=\Theta(\sqrt{w})$.
The main issue, compared to~\Cref{thm:dictionary}, is that the prefix of $\SELECT(S,i+1)$ may differ from $Z[i]=\SELECT(S,i+1)-i$.
The trick is to notice that since $i$ has few bits
(namely, $\log i\leq \log n \leq {\delta-\ell}$),
then $Z[i]\gg (\delta-\ell)$ will be either $p$ or $p-1$, where $p= \SELECT(S,i+1)\gg (\delta-\ell)$.
Therefore, we only need to check both prefixes and we can reuse all the $\RANKM$ machinery used in~\Cref{thm:dictionary} for the prefixes of~\Cref{thm:small-universe}.

Recall that we have a $\RANK$ index for the separators and we can navigate to the relevant miniblock in $\O(k)$ time.
Again, we consider the two cases of~\Cref{lemma:exploit} for $\ell=\Theta(\sqrt{w})$ and a multiset (a miniblock) of $\Theta(w^{1/4})$ elements.
\begin{enumerate}[label=(\Roman*)]
\item Because the miniblock fits into one word, we can compute $\RANKM$ on the prefixes in constant time, plus $\O(1)$ accesses to the $\SELECT$ dictionary. If all prefixes are distinct, we get a unique answer, and we only have to check this element and the previous (due to the difference to the prefix $Z$).
\item We replace the bits in the shared prefix in one of the elements by
the partial-sum structure, taking $\O(s\cdot w^{1/4})\leq \ell$ bits.
This is analogous to~\Cref{thm:dictionary}.
\end{enumerate}
Therefore, we support $\RANK/\SELECT$ on the complement of $S$
(for $\size{S}=n=w^{\O(1)}$ elements)
using asymptotically the same redundancy and time as~\Cref{thm:dictionary}.
\end{proof}

\section{Lower bound}

In this section, we prove our main lower bound,
based on the proof of~\cite{li2023tight},
modified to handle $u\geq n^2$.
The techniques of~\cite{li2023tight} consider
a universe $u=n^{1+\alpha}$ where $\alpha=\Theta(1)$.
Removing this requirement is described in the mechanized proof of~\cite{dambrosio2026always} focusing on the case $R<n$.
We provide here a simplification of the techniques that suffices for our regime of $R\geq n$,
while side-stepping a technical issue pointed out by~\cite{dambrosio2026always}.

Recall that the proof considers the following hard instance:
\begin{itemize}
\item Let $K$ and $K'$ be two uniformly random disjoint subsets of $[u]$ of $n$ elements.
\item  Let $D$ and $E$ be uniformly random permutations of $K$ and $K'$, respectively.
\end{itemize}
The hard instance inserts all elements of $K$ into the dictionary, then, for every $i$ in $[n]$, the instance performs a \emph{meta-operation}: queries $D[i]$, deletes $D[i]$, and inserts $E[i]$.

Next, we consider a sequence of $m$ meta-operations
that are divided into segments.
A probe to a cell $c$ is a \emph{repeated probe} if:
\begin{enumerate*}[label=(\roman*)]
\item the cell $c$ is probed in multiple segments, and
\item it is the first probe to $c$ in the segment.
\end{enumerate*}
The proof relies on the following central lemma.

\begin{restatable}[Outer Lemma]{lemma}{LemmaOuter}\label{lemma:outer}
For $\log u=w$ and $n\leq \sqrt{u}$,
consider a sequence of $m\leq n$ meta-operations, divided into $\lambda$ segments.
If $m\geq \Cfactor \max\brc{R,n}/\log \lambda$ and $\Clambda\leq\lambda\leq \frac{\log n}{\log\log n}$,
then, for $u$ sufficiently large, the expected number of repeated probes is at least $m/\Cprobe$.
\end{restatable}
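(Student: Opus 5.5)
We follow the information-transfer / encoding strategy of~\cite{li2023tight}: we show that if there were few repeated probes, one could encode the random input (the sets $K$, $K'$ and the permutations $D$, $E$) in fewer bits than its entropy. Fix the deterministic dictionary and split the $m$ meta-operations into $\lambda$ segments of $m/\lambda$ meta-operations each.

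The encoder is given the full input. It simulates the structure up to the start of the sequence and writes down the memory contents there, which costs $\lg\binom{u}{n}+R$ bits. It then describes the segments one at a time. A decoder that holds the memory state at the start of a segment can replay that segment's meta-operations only if it knows two things: the queried and inserted keys $D[i],E[i]$, and the contents of every cell first probed in the segment that was last written in an earlier segment. A cell written earlier but probed in this segment is exactly a repeated probe, so the encoder lists each such probe, with its address and contents, at a cost of $\O(w)$ bits per probe.

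Next we have to pay for the keys $E[i]$ that are inserted. Here the redundancy matters. The decoder already knows the current set, so the deleted keys $D[i]$ can be described by rank within that set, costing about $\log n$ bits each instead of $w$. The inserted keys $E[i]$ carry about $\log(u/n)$ bits of genuine entropy each. The point of dividing into $\lambda$ segments is this: a key inserted in one segment and queried or deleted in a later segment can be reconstructed by the decoder from the later segment's behaviour, as long as that later segment contains no repeated probe back to the cells holding the key. If it does not, the encoder can drop part of the key's description.

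Quantitatively, we would argue by contradiction, assuming fewer than $m/\Cprobe$ repeated probes in expectation. By Markov's inequality and averaging over segment boundaries, a typical pair of segments has few repeated probes across it. Then for $\Omega(m)$ of the keys, the "which of the roughly $\lambda$ segments does this key come back in" information, about $\log\lambda$ bits per key, is saved from the encoding. This is the classical $\log\lambda$ gain from the hierarchical argument. The total saving is $\Omega(m\log\lambda)$ bits. The encoder also pays the repeated-probe cost, $(m/\Cprobe)\cdot\O(w)$, but we would instead charge each repeated probe against the $\O(\log(u/n))=\Theta(w)$ bits of the key it reveals, so that cost cancels. The extra redundancy is $R+\O(n)$ bits. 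Since $m\log\lambda\ge\Cfactor\max\{R,n\}$, the net saving is positive, so the encoding would be shorter than the entropy of the input, which is a contradiction.

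Two hypotheses are used here. The condition $n\le\sqrt u$ ensures $\log(u/n)\ge\tfrac12 w$, so a repeated probe ($w$ bits) and a key ($\Theta(w)$ bits) are comparable up to constants; this is where we escape the polynomial-universe restriction. The condition $\lambda\le\log n/\log\log n$ keeps the bookkeeping for segment boundaries, $\O(\lambda\log n)$ bits per segment index, negligible.

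The main obstacle is making the $\log\lambda$ saving rigorous without conditioning on the event that a segment has few repeated probes; this conditioning is the issue discussed in \Cref{appendix:inner}. We would first try to avoid it entirely. The plan is to encode unconditionally: the encoder always writes the repeated probes and, for every key, either its explicit value or a short pointer, and we bound the expected length directly using linearity of expectation over keys. This way the input distribution stays product-uniform, and the entropy comparison $\mathbb{E}[\text{length}]\ge H(K,K',D,E)$ applies directly.
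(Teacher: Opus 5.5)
There is a genuine gap: the proposal never gives a mechanism that saves the $\Omega(m\log\lambda)$ bits, and the one it suggests does not exist in this hard instance. The inserted keys $E[i]\in K'$ are never queried or deleted during the $m$ meta-operations; only keys of $K$ are. So there is no later segment in which an inserted key ``comes back'' and can be reconstructed from the structure's behaviour.

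More fundamentally, your forward encoding (memory at the start, each $D[i]$ by rank in the current set, each $E[i]$ explicitly) already matches the entropy of the input up to the $R$ extra bits of the initial memory. A contradiction can therefore only come from information the memory is \emph{forced} to carry beyond the current key set, and your scheme never extracts any. The list of repeated probes is pure cost: a decoder holding the memory at the start of a segment needs no cell contents to replay it. Likewise, ``charging each repeated probe against the key it reveals'' has no reconstruction step behind it.

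The missing idea, which the paper's proof supplies, is to recover the segment-assignment information $\log(m!/(\mu!)^\lambda)\approx m\log\lambda$ of the inserted keys from the \emph{final} memory state. Bob is given $C_{\text{init}}$, the inserted set $A$, $D$, and the within-segment orders. Alice sends $C_{\text{end}}$, which costs only $R$ bits because Bob already knows $K_{\text{end}}$. If segment $i$ probes fewer than $\mu/16$ repeated cells, the cells it touches essentially survive to $C_{\text{end}}$. Bob can then simulate every candidate $B$ from $C_i$ and keep the family $\mathcal{Q}_i$ of candidates whose probed cells nearly agree with $C_{\text{end}}$; Alice sends only the index of $A_i$ in $\mathcal{Q}_i$.

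The crux is the Inner Lemma: two candidates with $|A\cap B|<\mu/2$ are both consistent with probability at most $u^{-\mu/16}$. After a union bound over $\binom{m}{\mu}$ candidates, $\mathcal{Q}_i$ contains only sets sharing at least $\mu/2$ keys with $A_i$. This saves $\frac{\mu}{2}\log(\lambda/64)$ bits on, in expectation, $3\lambda/16$ of the first $\lambda/2$ segments, whence $R\ge\frac{1}{32}m\log\lambda$.

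The Inner Lemma is itself a second encoding argument, and it is where $\log u=w$ and $n\le\sqrt u$ really enter: the $\mu/8$ differing $w$-bit cells must be cheaper than the $\frac{\mu}{2}\log(u/\mu)$ bits of entropy of the extra set. The hypothesis $\lambda\le\log n/\log\log n$ is also not about bookkeeping. It gives $\lambda\log\lambda\le\log n$, hence $\mu=m/\lambda\ge 64\max\{R,n\}/\log n$, which the Inner Lemma needs to absorb the $R+6n$ term.
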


Our main departure compared to~\cite{li2023tight,dambrosio2026always} is to consider only the number of repeated probes,
simplifying the proof and removing the polynomial universe requirement.
Moreover, we slightly optimize the constants.
See the proof of this lemma in~\Cref{appendix:outer}.
Given this, we prove the main lower bound, and thus~\Cref{thm:lower-bound} follows.
\LemmaProbe*
\begin{proof}
Divide the $n$ meta-operations into segments of $128\max\{R,n\}/\log n$ operations; construct a tree, where  each of those segments is a leaf.
Level $i\geq 1$ has degree $\lambda_i=\log^{(i)}n/\log^{(i+1)} n$.
We stop the tree and put a root if either 
we connected all meta-operations or
$\lambda_{i}< \Clambda$.
The node at level $i$ is responsible for a segment of $m_i=128\max\{R,n\}/\log^{(i+1)} n$ operations.
Because $R=\O(n\log^{(k)}n)$,
the tree has height $h=\Theta(k)$.

For every node at every level $i$ (except the root),
we apply the Outer Lemma (\Cref{lemma:outer}).
The requirements are
that $1024\leq \lambda_i\leq \lambda_1={\log n}/{\log\log n}$
and that
$m_i\geq \Cfactor\max\{R,n\}/\log \lambda_i$,
which hold for our case.
Thus, the number of repeated probes in the level $i$ is at least $n/\Cprobe$.
Except for the first repeated probe to cell $c$ in the node,
each repeated probe to cell $c$
is counted uniquely by the Outer Lemma (not counted in the upper levels).
Removing the first repeated probe, the remaining probes are at least half of the repeated probes counted in a given node.
To account for this,
we underestimate by charging $\frac{1}{2}$ for each repeated probe
when summing over all levels.
This yields at least $\frac{1}{2}(nh/\Cprobe)=\Omega(nk)$ probes for the sequence of meta-operations.

To extend to $\log u=\Theta(w)$,
we think of the memory as cells of $(\lg u)$-bits.
Applying the theorem for $(\lg u)$-bit probes,
then the number of $w$-bit probes is at least $\frac{1}{2}(nh/\Cprobe)/(\ceil{w/\lg u}+1)=\Omega(nk)$.
\end{proof}

\section{Discussion and Open problems}

In this paper,
we have constructed a space-efficient data-structure for dynamic $\RANK/\SELECT$ dictionaries in the standard Word-RAM model and proved that,
in the regime $n\leq w^{\O(1)}$,
the running time for the operations is optimal for the redundancy used.
Moreover, we provide a generalization of the lower bound of~\cite{li2023tight} for dynamic dictionaries in super-polynomial universes with a simpler proof that shows our upper bound is optimal.
There are two remaining open problems, which we will discuss next.

\paragraph*{Smaller universes.}
We have shown that we can achieve better results when $\log u=\O(\sqrt{w})$ in the standard Word-RAM model
and when $\log u=w/\polylog w$ in the cell-probe model.
This is a sharp distinction compared to the $\log u=\Theta(w)$ case of the lower bound.
We ask what exactly are the trade-offs of universe size to redundancy when $\log u=o(w)$.

\paragraph*{Achieving $o(n)$ redundancy.}
To our knowledge, no deterministic data-structure for dynamic ordered dictionaries achieved $o(n)$ redundancy for any regime.
The recent result of~\cite{kuszmaul2026succinct} used a randomized data-structure to achieve $\O(nw/2^{(\log_w n)^{1/3}})$ redundancy and optimal amortized time in expectation.
For the regime $n=w^{\omega(\log^4 w)}$, the redundancy is $o(n)$.
This can be de-amortized with the cardinality reduction lemma of~\cite{domingues2026compressing},
however, it incurs an additional $\O(n)$ redundancy.
Further work is required to achieve worst-case times for $o(n)$ redundancy.
The lower bounds show that we need at least $\Omega(\log^* n + \log(n/R))$ time for $R<n$ redundancy.
This has been achieved for dictionaries that only support membership~\cite{li2024dynamic}.


\section*{Acknowledgments}
This research was supported by the Israel Science Foundation, grant No. 1948/21.
The author thanks the anonymous reviewers for their helpful comments.

\paragraph*{AI disclosure.} All technical content and text in this paper were written by the author. Claude Opus 5.5 was used to proofread the text, improve the clarity of exposition, and point out typos and inconsistencies.

\bibliography{main}

\begin{appendices}

\section{Outer Lemma}\label{appendix:outer}

We work towards a proof for the Outer Lemma (\Cref{lemma:outer})
following closely the proof of~\cite{li2023tight}, while removing the requirement that $u=n^{\O(1)}$.
The structure of the proof is as follows: assuming that the expected number of repeated probes is at most $m/\Cprobe$, show a protocol for a communication game that contradicts the lower bound on $m$.
Since we are interested in the range $R\geq n$,
we are able to simplify the proof to only consider the number of probes in different segments.

The communication game is performed as follows.
Let $E$ be the insertion sequence, that is divided into $\lambda$ segments of $\mu=m/\lambda$ operations denoted $E_i$.
Let $A_i$ be the set of keys of $E_i$.
Similarly, the deletion sequence $D$ is divided into segments $D_i$.
The proof considers a communication game between Alice and Bob,
where Bob knows:
\begin{enumerate}[label=(\roman*)]
    \item the initial memory state $C_{\text{init}}$,
    \item the set of insertion keys $A=\bigcup_{i=1}^{\lambda}A_i$,
    \item the deletion sequence $D$,
    \item a set of permutations $\pi_i$ such that $\pi_i(A_i)=E_i$ (interpreting $A_i$ as a sorted sequence).
\end{enumerate}
Given this, Bob wants to learn all $A_i$.
The entropy, given what Bob knows, 
is $\log\brk*{m!/(\mu!)^\lambda}$,
that is, the entropy of arranging a set of $m$ elements into $\lambda$ segments of $\mu$ elements (without order in each segment).

First, Alice sends to Bob $C_{\text{end}}$, the final memory state.
Since Bob knows $C_{\text{init}}$ and the set of insertion keys and the deletion sequence, he knows the final set of keys $K_{\text{end}}$.
Thus, Alice only needs to send $H(C_{\text{end}}\mid C_{\text{init}}, D,A,\pi)\leq H(C_{\text{end}})-H(K_{\text{end}})\leq\log\binom{u}{n}+R-\log\binom{u}{n}=R$ bits.

A memory state is a partial map $C:\{0,1\}^w\rightharpoonup \{0,1\}^w$ and $C(S)$ is the memory state of the cells in the set $S$.
The distance between two memory states is the number of cells that they differ.
Starting at a state $C$, let $C_{D,B}$ denote the state after performing the sequence of meta-operations: for $j\in[\mu]$, query $D[j]$, delete $D_i[j]$, and insert $\pi_i(B)[j]$.
Denote by $S_{D,B}$ the set of cells probed during this sequence of meta-operations.

For every $1\leq i\leq \lambda$, we proceed with the following protocol.
Let us refer to the cells that are probed in multiple segments as \emph{repeated}.
Let $X_i=1$ if segment $i$ probes fewer than $\mu/\Cgamma$ repeated cells, and $X_i=0$ otherwise.
Let $C_i$ denote the memory state before applying the $i$-th segment operations.
Define the following families of sets.
\begin{align*}
\mathcal{F}_i&=\set{B\subseteq \textstyle\bigcup_{j=i}^{\lambda}A_j}{\size{B}=\mu}\\
\mathcal{Q}_i&=\set{B\in \mathcal{F}_i}{\dist(C_{D_i,B}(S_{D_i,B}), C_{\text{end}}(S_{D_i,B}))<\mu/\Cgamma,~\text{where }C=C_i}
\end{align*}
Observe that if $X_i=1$, then $A_i\in\mathcal{Q}_i$.
Then,  Alice sends the message $M_i$: 
first send $X_i$,
then, if $X_i=1$, she sends the index of $A_i$ in $\mathcal{Q}_i$,
otherwise she sends the index of $A_i$ in $\mathcal{F}_i$.
Importantly, $B$ can simulate the meta-operations and thus knows $\mathcal{Q}_i$.

Now we bound the message length.
Denote $\mathcal{G}_i=\set{B\in \mathcal{F}_i}{\size{A_i\cap B}\geq \mu/2}$.
Let $Y_i=1$ iff $\mathcal{Q}_i\subseteq \mathcal{G}_i$.
Then, the length of the message $M_i$ is bounded by:
$M_i
=1+(1-X_i)\cdot\log\size{\mathcal{F}_i}+X_i\cdot\log\size{\mathcal{Q}_i}
\leq 1+(1-X_iY_i)\cdot\log\size{\mathcal{F}_i}+X_iY_i\cdot\log\size{\mathcal{G}_i}
= 1+\log\size{\mathcal{F}_i}-X_iY_i\cdot
\brk*{\log\size{\mathcal{F}_i}- \log\size{\mathcal{G}_i}}$.
To complete the proof of the Outer Lemma,
we bound the probability that two sets with bounded intersection
will have few cells that differ after the sequence of meta-operations.
Note that if $A,B\in\mathcal{Q}_i$, for $S=S_{D,A}\cap S_{D,B}$,
then
$\dist(C_{D,A}(S), C_{\text{end}}(S))<\mu/\Cgamma$
and
$\dist(C_{D,B}(S), C_{\text{end}}(S))<\mu/\Cgamma$,
thus 
$\dist(C_{D,A}(S), C_{D,B}(S))<\mu/\Cdelta$.
We show the following Inner Lemma (the proof is deferred to~\Cref{appendix:inner}).
\begin{restatable}[Inner Lemma]{lemma}{LemmaInner}\label{lemma:inner}
For $\log u=w$,
$n\leq \sqrt{u}$,
$\Cfactor\max\{R,n\}/\log n\leq \mu<n/2$,
and $g<\mu/2$,
consider the following distribution:
$K$ is a uniform subset of $[u]$ with $n$ elements,
$A$ and $B$ are uniform subsets of $[u]\setminus K$,
each with $\mu$ elements and having $\size{A\cap B}<\mu/2$,
and $D$ is a uniform sequence of $K$ with $\mu$ elements.
Then, for initial state $C$ with key-set $K$ and for sufficiently large $u$,
\begin{equation}
\Prc!{\dist(C_{D,A}(S_{D,A}\cap S_{D,B}), C_{D,B}(S_{D,A}\cap S_{D,B})) \leq \mu/\Cdelta}{\size{A\cap B}=g}\leq u^{-\mu/16}~.
\end{equation}
\end{restatable}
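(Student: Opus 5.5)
We prove the Inner Lemma by a union bound over the possible \emph{probe patterns} of the two simulated segments, exploiting that the elements of $A\setminus B$ are fresh random keys. Since the dictionary is deterministic, the sets $S_{D,A}$, $S_{D,B}$ and the resulting contents are functions of $(C,D,A)$ and $(C,D,B)$. Condition on $K$, $C$, $D$ and on the common part $A\cap B$ (of size $g$). Then $A\setminus(A\cap B)$ and $B\setminus(A\cap B)$ are uniform disjoint sets of $\mu-g>\mu/2$ keys from $[u]\setminus(K\cup(A\cap B))$. The plan is to fix $A$ and show that, for any fixed target, only a $u^{-\Omega(\mu)}$ fraction of choices of $B$ can produce memory contents agreeing with $C_{D,A}$ on all but $\mu/\Cdelta$ of the jointly probed cells. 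This conditioning on $A$ being fixed first is exactly where the technical issue of~\cite{li2023tight} arises, so we handle it carefully.

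\textbf{Step 1 (encoding argument).} Given $C,D,A$ and the event, we show that $B\setminus A$ can be encoded in fewer than $\log\binom{u-n-g}{\mu-g}-(\mu/16)\log u$ bits. The decoder knows $C_{D,A}$. The encoder sends the following:
\begin{enumerate*}[label=(\alph*)]
\item the at most $\mu/\Cdelta$ cells of $S_{D,A}\cap S_{D,B}$ where the two runs differ, together with their contents under $B$, costing $\O((\mu/\Cdelta)\,w)$ bits;
\item the contents of the cells in $S_{D,B}\setminus S_{D,A}$, as a set of addresses and values.
\end{enumerate*}
With (a) and (b), the decoder can simulate the $B$-run: each probe either lands in $S_{D,A}$, where the content is known from $C_{D,A}$ or from the patch list, or is a fresh cell supplied by (b). The simulation cannot proceed without knowing $B$'s keys, however, so the encoder also sends the keys of $B$ in order via $\pi$. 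We therefore reorganize the argument: instead of a full encoding, we count the number of distinct final restrictions $C_{D,B}(S_{D,B})$ relative to $C_{D,A}$.

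\textbf{Step 2 (the real count).} The key point is that after the $B$-segment, the state must still be a valid dictionary containing $B$. So $B$ is determined by $C_{D,B}$ restricted to the cells that future membership queries would probe, and in particular $B$ is recoverable from $C_{D,B}(S_{D,B})$ together with $C$. The number of possible $C_{D,B}(S_{D,B})$ within distance $\mu/\Cdelta$ of $C_{D,A}$ on the common cells is at most the number of choices of the private cells $S_{D,B}\setminus S_{D,A}$ times $u^{O(\mu/\Cdelta)}$ for the patched cells. If the $B$-run probes $t$ cells outside $S_{D,A}$, each must be identified and filled with $w=\log u$ bits.

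\textbf{Step 3 (main obstacle: the probe count).} We need to show that, on average, only about $\mu/2$ fresh cells are used; but nothing a priori bounds $t$. The $\mu\geq \Cfactor\max\{R,n\}/\log n$ hypothesis should enter here, through a redundancy argument: the total information in $B\setminus A$ is about $(\mu-g)\log(u/n)\geq(\mu/2)(\log u)/2$ using $n\leq\sqrt u$. If the probes are few, the counting gives probability $u^{(\mu/\Cdelta)+\ldots-\mu/4}\leq u^{-\mu/16}$. If they are many, we split on $t$ and bound each level by a union bound over addresses weighted by the deterministic probe sequence. I expect the delicate part to be making the ``$B$ is recoverable from the probed cells'' step rigorous without circularity, while keeping the constants so that $\mu/\Cdelta\cdot w$ plus lower-order terms stays below $(\mu/4-\mu/16)\log u$.
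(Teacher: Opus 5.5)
Your proposal has a genuine gap, and it is the one you run into yourself in Step~3. You condition on $K$, $C$, $D$ and fix $A$, so your decoder knows $C_{D,A}$. To rebuild $C_{D,B}$ (hence $B$) it still needs the at most $\mu/8$ patched words \emph{and} the full contents of the $t=|S_{D,B}\setminus S_{D,A}|$ cells touched only by the $B$-run. Nothing bounds $t$ except the size of the memory, about $n$ cells. The hypothesis $\mu\geq 64\max\{R,n\}/\log n$ still allows $n\approx\mu\log n/64\gg\mu$.

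Each such cell holds an arbitrary $w$-bit word. So a count of the form $2^{O(n)}u^{\mu/8+t}$ exceeds the at most $u^{\mu-g}/(\mu-g)!$ choices of $B\setminus A$ once $t$ is a constant fraction of $\mu$. For $g$ close to $\mu/2$, even your hoped-for $t\approx\mu/2$ is too many. Splitting on $t$ or union-bounding over addresses cannot change that.

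The underlying problem is that a bound for every fixed $A$ (and fixed $K$, $C$) is stronger than the lemma. Your only handle on the data structure is its space bound $\lg\binom{u}{n}+R$, and that constrains it only on average over the inputs. By conditioning on $K$ and $A$ you have discarded exactly the entropy that has to pay for those cells.

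The paper never describes those cells individually. It does not condition on $K$ or $C$. Instead it runs one encoding of $(K,D,A,B)$ given only the two permutations, and makes $A$, not $B$, the cheap part.
\begin{enumerate}
\item Alice sends $B$ in full, then a hybrid image $C_{\text{mix}}$ equal to $C$ on $S_{D,B}$ and to $C_{D,A}$ outside $S_{D,B}$.
\item The whole memory is a string of $\lg\binom{u}{n}+R$ bits, i.e.\ at most $n+n/256$ cells (using $R\leq\mu\log n/64<n\log n/128$ and $n\leq\sqrt{u}$). So $C_{\text{mix}}$ costs $\lg\binom{u}{n}+R$ bits, which the entropy of $K$ pays for. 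Naming $S_{D,B}$, $S_{D,A}$ and the differing cells costs only $O(n)$ bits.
\item Bob simulates the $B$-run on $C_{\text{mix}}(S_{D,B})=C(S_{D,B})$, identifying $D$ by an index into $K\cup B$. He patches the at most $\mu/8$ words to get $C_{D,A}$ on $S_{D,A}\cap S_{D,B}$.
\item He uses $C_{D,A}=C$ on $S_{D,B}\setminus S_{D,A}$, and reads $C_{D,A}$ outside $S_{D,B}$ directly from $C_{\text{mix}}$.
\end{enumerate}
Knowing all of $C_{D,A}$, Bob knows $(K\setminus D)\cup A$, so $A$ costs only $\log\binom{n}{\mu}$ bits instead of about $\mu\log(u/\mu)$. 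Weigh this saving against the roughly $\frac{\mu}{2}\log(u/\mu)$ lost by sending $B$ without conditioning on $A$, plus the $\frac{\mu}{8}w+R+6n$ overhead; the result is the $u^{-\mu/16}$ bound. The hypothesis on $\mu$ is used only to bound the memory size and absorb $R+6n$, not to control how many cells a run touches.
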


We required $m\geq \Cfactor \max\{R,n\}/\log \lambda$
and $\lambda\leq\frac{\log n}{\log\log n}$, and
thus $\mu\geq \Cfactor\max\{R,n\}/(\lambda\log \lambda)\geq \Cfactor\max\{R,n\}/\log n$.
Now we have all the ingredients to prove the Outer Lemma;
following the procedure of~\cite{li2023tight} with slightly different constants.

\LemmaOuter*
\begin{proof}
Suppose by contradiction that the expected number of repeated probes is at most $m/\Cprobe$.
Using this, we prove $\mathbb{E}\brk[s]{\sum_{i=1}^{\lambda/2}X_i}\geq \lambda/4$.
If $X_i=0$, then segment $i$ probes at least $\mu/\Cgamma$ repeated cells.
Thus:
$\mathbb{E}\brk[s]*{\sum_{i=1}^\lambda (1-X_i)\cdot {\mu/\Cgamma}}
\leq {m/\Cprobe}
\Rightarrow
\mathbb{E}\brk[s]*{\sum_{i=1}^\lambda (1-X_i)}
\leq {\lambda/4}$.
Hence, $\mathbb{E}\brk[s]*{\sum_{i=1}^{\lambda/2} X_i}\geq {\lambda/2}-\mathbb{E}\brk[s]*{\sum_{i=1}^{\lambda/2}(1-X_i)}\geq \lambda/2-{\lambda/4}=\lambda/4$.

For the next step,
we observe
$\Pr{Y_i=0\wedge X_i=1}
\leq \Pr{A_i\in\mathcal{Q}_i\text{ and }{\brk{\mathcal{F}_i\setminus \mathcal{G}_i}\cap \mathcal{Q}_i} \neq\varnothing}$,
recalling that if $X_i=1$, then $A_i\in\mathcal{Q}_i$.
Recall that if $A,B\in\mathcal{Q}_i$,
then
$\dist(C_{D,A}(S_{D,A}\cap S_{D,B}), C_{D,B}(S_{D,A}\cap S_{D,B}))<\mu/\Cdelta$.
We get the following bound:
\begin{align*}
\operatorname{Pr}[A_i\in\mathcal{Q}_i&\text{ and }
{(\mathcal{F}_{i}\setminus\mathcal{G}_i)\cap \mathcal{Q}_i}\neq\varnothing]\leq\\
&\leq \operatorname{Pr}[A_i\in\mathcal{Q}_i\text{ and }{\exists B\in (\mathcal{F}_{i}\setminus\mathcal{G}_i)}:B\in \mathcal{Q}_i]\\
&\leq \mathbb{E}[\size{\set{{B\in (\mathcal{F}_{i}\setminus\mathcal{G}_i)}}{A_i\in\mathcal{Q}_i\text{ and }B\in \mathcal{Q}_i}}]\\
&\leq \size{\mathcal{F}_{i}\setminus\mathcal{G}_i}\cdot
\max_{0\leq g<\mu/2}\Prc!{A,B\in \mathcal{Q}_i}{\size{A\cap B}=g}
\qquad\text{($A,B\in\mathcal{F}_i$)}
\\
&\leq \size{\mathcal{F}_i}\cdot\max_{0\leq g<\mu/2}
\Prc!{\dist(C_{D,A}(S_{D,A}\cap S_{D,B}), C_{D,B}(S_{D,A}\cap S_{D,B}))\leq \mu/\Cdelta}{\size{A\cap B}=g}
\end{align*}
We apply the Inner Lemma (\Cref{lemma:inner}) using the key-set $K_i=\brk*{K\setminus \bigcup_{j=1}^{i-1} D_j}\cup \brk*{\bigcup_{j=1}^{i-1} A_j}$ and the state $C_i$.
The subtle issue here is that we use the symmetry of distribution and the initial state that is present in the Inner Lemma.
Importantly, $K_i$ is a uniformly distributed $n$-element subset of $[u]$.
This yields $$\Pr{Y_i=0\wedge X_i=1}\leq \binom{m}{\mu}\cdot u^{-\mu/16}\leq\brk*{\frac{4\lambda}{u^{1/16}}}^{\mu}\leq \frac{1}{8}~.$$
for $u$ sufficiently large.
Hence, 
$\sum_{i=1}^{\lambda/2}\mathbb{E}\brk[s]{X_iY_i}=\mathbb{E}\brk[s]{\sum_{i=1}^{\lambda/2}X_i}-\sum_{i=1}^{\lambda/2}\Pr{X_i=1\wedge Y_i=0}\geq \lambda/4-(1/8)\cdot\lambda/2=3\lambda/16$.

Next, we bound $\log\size{\mathcal{F}_i}-\log\size{\mathcal{G}_i}$, for $i\leq\lambda/2$.
We note that in this case, 
$\size{\bigcup_{j=i}^\lambda A_j}\geq m/2$
and $\size{\bigcup_{j={i+1}}^\lambda A_j}\leq m-\mu$.
Therefore:
\begin{align*}
\log\size{\mathcal{F}_{i}}-
\log\size{\mathcal{G}_{i}}
&=\log{\binom{\size{\bigcup_{j=i}^\lambda A_j}}{\mu}}
-\log\brk[s]*{\sum_{g=\mu/2}^\mu \binom{\mu}{g}\binom{\size{\bigcup_{j=i+1}^\lambda A_j}}{\mu-g}}\\
&\geq\log{\binom{m/2}{\mu}} 
-\log\brk[s]*{\mu\binom{\mu}{\mu/2}\binom{m-\mu}{\mu/2}}\\
&=\brk[s]*{\log{\binom{m/2}{\mu}}-\log\binom{m/2}{\mu/2}}
-\log\brk[s]*{\mu\binom{\mu}{\mu/2}}
+\brk[s]*{\log{\binom{m/2}{\mu/2}}-\log\binom{m-\mu}{\mu/2}}
\\
&\geq\brk[s]*{\log{\binom{(m-\mu)/2}{\mu/2}}-\log\binom{\mu}{\mu/2}}
-\log\brk[s]*{\mu\binom{\mu}{\mu/2}}
-\frac{\mu}{2}
\\
&=\log{\binom{(m-\mu)/2}{\mu/2}}
-2\log{\binom{\mu}{\mu/2}}
-\log\mu
-\frac{\mu}{2}
\\
&\geq \frac{\mu}{2}\log(\lambda-1)
-2\mu
-\frac{\mu}{2}
\geq \frac{\mu}{2}\log(\lambda/64)~,
\end{align*}
where we used the inequalities
$\log\binom{u}{n}\geq n\log\frac{u}{n}$,
$\log\binom{u}{u/2}\leq u-\frac{1}{2}\log u$,
and
$\log\binom{u}{n}-\log\binom{u'}{n}\leq n$  
for $u\geq u'\geq (u+n)/2$.
Hence, we get the bound $\log\size{\mathcal{F}_i} - \log\size{\mathcal{G}_i}\geq ({\mu}/{2})\cdot \log(\lambda/64)$ for $i\leq\lambda/2$.

Lastly, we bound the number of bits Alice sends to the entropy of what Bob learns: $H(M_{\text{learn}}\mid C_{\text{init}}, A, D, \pi)\leq H(M_{\text{send}}\mid C_{\text{init}}, A, D, \pi)$.
Therefore, noting that $\prod_{i=1}^\lambda\size{\mathcal{F}_i}=m!/(\mu!)^\lambda$, we get:
\begin{align*}
\log(m!/(\mu!)^\lambda)
&\leq R +\sum_{i=1}^\lambda\mathbb{E}[M_i] 
=
R + \sum_{i=1}^\lambda\brk[s]!{1+\log\size{\mathcal{F}_i}-\mathbb{E}\brk[s]{X_iY_i}\cdot
\brk*{\log\size{\mathcal{F}_i}- \log\size{\mathcal{G}_i}}}
\\&\leq
R + \lambda + \log(m!/(\mu!)^\lambda) - \sum_{i=1}^{\lambda/2} \mathbb{E}\brk[s]{X_iY_i}\cdot
\brk*{\log\size{\mathcal{F}_i}- \log\size{\mathcal{G}_i}}
\\&\leq
R + \lambda + \log(m!/(\mu!)^\lambda) -\frac{3\lambda}{16}\cdot\frac{\mu}{2}\log(\lambda/64)~.
\end{align*}
Thus, $R\geq \frac{3}{32}m\log(\lambda/64)-\lambda\geq \frac{1}{32}m\log \lambda$ 
(using $\lambda\geq\Clambda$ and $\mu\geq 64$),
which is a contradiction to the requirement $m\geq \Cfactor R/\log\lambda$.
Therefore, it is impossible for the expected number of repeated probes to be at most $m/\Cprobe$.
\end{proof}

\section{Inner lemma}\label{appendix:inner}

Next, we prove the inner lemma.
The proof follows~\cite{li2023tight}, with the simplification for large $\mu$, that avoids using more complicated randomized set separators.
Due to this, we are able to drop some requirements, namely bounds on the expected size of $S_{D,A}$ and other inequalities in $n$ and $\mu$; this side-steps the conditioning issue pointed out in~\cite{dambrosio2026always},
that required an expectation bound on the number of probes.

\LemmaInner*
\begin{proof}
Let $W=1$ if $\dist(C_{D,A}(S_{D,A}\cap S_{D,B}), C_{D,B}(S_{D,A}\cap S_{D,B})) \leq \mu/\Cdelta$.
Again, the proof follows the structure of~\cite{li2023tight},
playing a communication game between Alice and Bob.
In this case, Bob only knows the permutations of the insertions and deletions, denoted $\pi$ and $\sigma$.
Bob wants to learn $A$, $B$, $D$, and $K$, if $W=1$.
The entropy of what Bob wants to learn is:
\begin{align*}
H(M_{\text{learn}}|W=1)&\geq
-\log\max_{A,B,D,K}\Prc{A,B,D,K}{W=1}\\
&\geq \log\Pr{W=1}
+\log\binom{u}{n}
+\log\binom{n}{\mu}
+\log\binom{\mu}{g}
+\log\binom{u-n-\mu}{\mu-g}
+\log\binom{u-n}{\mu}\\
&\geq \log\Pr{W=1}
+\log\binom{u}{n}
+\log\binom{n}{\mu}
+\log\binom{u-n-\mu}{\mu/2}
+\log\binom{u-n}{\mu}~.
\end{align*}

Alice sends $W$, $B$, a mixed memory state $C_{\text{mix}}$
where $C_{\text{mix}}(S_{D,B})=C(S_{D,B})$ and $C_{\text{mix}}(\overline{S_{D,B}})=C_{D,A}(\overline{S_{D,B}})$.
We observe $C_{\text{mix}}$ has at most $2n$ cells, as 
$$\brk*{\lg\binom{u}{n}+R}/\lg u\leq n+R/\lg u\leq n+\frac{n\log n}{128\lg u}\leq n+\frac{n}{256}~,$$
using the conditions
$n\leq \sqrt{u}$ and
$\Cfactor R/\log n<n/2$.
Then, Alice sends $S_{D,B}$, using $2n$ bits to indicate it.
With this, Bob can simulate the meta-operations for all sequences $D^*\subseteq[u]$, rejecting if it probes outside of $S_{D,B}$, or if a query returns false.
In fact, the set of all $D^*$ is a subset of $K\cup B$, thus Alice can send only an index into $K\cup B$ (using $\log\binom{n+\mu}{\mu}$ bits) for Bob to learn $D$.
Then, Alice sends $S_{D,A}$ (Bob can derive $S_{D,A}\cap S_{D,B}$)
and the cells that differ between $C_{D,A}(S_{D,A}\cap S_{D,B})$ and $C_{D,B}(S_{D,A}\cap S_{D,B})$.
We use $4n$ bits to indicate the cells and $\mu/\Cdelta$ words for the cell contents.

With this, Bob can recover $C_{D,A}$ and thus learn $(K\setminus D)\cup A$.
Recall that he already knows $D$, and thus the last step for
Alice is to send the index of $A$ in $(K\setminus D)\cup A$ (using $\log\binom{n}{\mu}$ bits).
\begin{align*}
H(M_{\text{send}}|W=1)\leq
\log\binom{u}{\mu}+\log\binom{u}{n}+R+2n+\log\binom{n+\mu}{\mu}+4n+\frac{\mu}{\Cdelta}w+\log\binom{n}{\mu}
\end{align*}
Recall that $\log u=w$ and $u\geq n^2$.
Using $H(M_{\text{send}}|W=1)\geq H(M_{\text{learn}}|W=1)$, we get:
\begin{align*}
-\log \Pr{W=1}
&\geq
\log\binom{u-n-\mu}{\mu/2}
+\brk[s]*{\log\binom{u-n}{\mu} -\log\binom{u}{\mu}}
-\log\binom{n+\mu}{\mu}
-R-6n-\frac{\mu}{\Cdelta}w
\\&\geq
\brk[s]*{\log\binom{u}{\mu/2}-\frac{\mu}{2}}-{\mu}
-\brk[s]*{\mu\log\frac{n+\mu}{\mu}+2\mu}
-R-6n -\frac{\mu}{\Cdelta}\log u
\\&\geq
\frac{\mu}{2}\log\frac{u}{\mu}
-\mu\log\frac{n}{\mu}-5\mu
-R-6n -\frac{\mu}{\Cdelta}\log u
=\mu\log\brk*{\frac{u^{3/8}\cdot \mu^{1/2}}{32n}}-R-6n
\\&
\geq\mu\log\brk*{\frac{u^{3/8}}{32n^{1/2}}\cdot\brk*{\frac{\Cfactor}{\log n}}^{1/2}}-\frac{7}{\Cfactor}\mu\log n
\geq \mu\log\brk*{\frac{u^{9/128}}{4\log^{1/2} n}}\geq \frac{\mu}{16}\log u~.
\end{align*}
where we used the given inequalities
$n\leq \sqrt{u}$ and
$\max\{R,n\}\leq \frac{\mu}{\Cfactor}\log n$
and repeatedly applied
$0\leq\log\binom{u}{n}-n\log\frac{u}{n}\leq 2n$ and
$\log\binom{u}{n}-\log\binom{u'}{n}\leq n\log\frac{u-n}{u'-n}\leq n$  
for $u\geq u'\geq (u+n)/2$.
\end{proof}

\end{appendices}

\end{document}